\documentclass[11pt,letterpaper]{article}
\usepackage[letterpaper,margin=1.05in]{geometry}
\usepackage[T1]{fontenc}
\usepackage[utf8]{inputenc}
\usepackage{mathpazo}
\usepackage[scaled=0.92]{helvet}
\usepackage{courier}
\usepackage{microtype}
\usepackage{amsmath,amssymb,amsthm,mathtools,bm}
\usepackage{booktabs,array,tabularx,longtable}
\usepackage{xcolor,graphicx,tikz}
\graphicspath{{large_library_source/figures/}{figures/}}
\usepackage{enumitem}
\usepackage{hyperref}
\usepackage[nameinlink,capitalise]{cleveref}
\usepackage{fancyhdr}
\usepackage{titlesec}
\usepackage{caption}
\usepackage{placeins}
\newif\ifarxiv \arxivtrue

\definecolor{ink}{HTML}{17213A}
\definecolor{accent}{HTML}{A36A00}
\definecolor{muted}{HTML}{5D6472}
\definecolor{rulecolor}{HTML}{D5D9E2}
\hypersetup{colorlinks=true,linkcolor=ink,citecolor=accent,urlcolor=accent,
  pdftitle={Large Signal Libraries: Equal-Weight Limits and the Divergent Spectra of Signals and PnL},
  pdfauthor={Marc Nunes},
  pdfsubject={Population limits for expanding libraries of cross-sectional trading signals}}
\titleformat{\section}{\Large\bfseries\color{ink}}{\thesection.}{0.55em}{}
\titleformat{\subsection}{\large\bfseries\color{ink}}{\thesubsection.}{0.5em}{}
\titlespacing*{\section}{0pt}{2.2ex plus .5ex}{1ex}
\titlespacing*{\subsection}{0pt}{1.6ex plus .4ex}{.7ex}
\setlist{nosep,leftmargin=1.5em}
\renewcommand{\headrulewidth}{0.35pt}
\renewcommand{\headrule}{\hbox to\headwidth{\color{rulecolor}\leaders\hrule height \headrulewidth\hfill}}

\newtheorem{theorem}{Theorem}[section]
\newtheorem{proposition}[theorem]{Proposition}
\newtheorem{corollary}[theorem]{Corollary}
\newtheorem{lemma}[theorem]{Lemma}
\theoremstyle{definition}
\newtheorem{definition}[theorem]{Definition}
\newtheorem{example}[theorem]{Example}
\theoremstyle{remark}

\newcommand{\E}{\mathbb E}
\newcommand{\R}{\mathbb R}
\newcommand{\cH}{\mathcal H}

\newcommand{\cT}{\mathcal T}
\newcommand{\ip}[2]{\left\langle #1,#2\right\rangle}
\newcommand{\norm}[1]{\left\lVert #1\right\rVert}
\newcommand{\one}{\mathbf 1}
\newcommand{\sig}{\mathrm{sig}}
\newcommand{\pnl}{\mathrm{pnl}}
\newcommand{\perpcomp}{\mathrm{\perp}}
\DeclareMathOperator{\Cov}{Cov}
\DeclareMathOperator{\Var}{Var}

\DeclareMathOperator{\spanop}{span}

\begin{document}
\hypersetup{pageanchor=false}

\begin{titlepage}
\color{ink}
\vspace*{0.65in}
{\sffamily\bfseries\fontsize{29}{34}\selectfont Large Signal Libraries\par}
\vspace{0.16in}
{\sffamily\fontsize{16}{21}\selectfont Equal-Weight Limits and the Divergent Spectra of Signals and PnL\par}
\vspace{0.32in}
{\color{accent}\rule{1.45in}{2pt}}\par
\vspace{0.34in}
{\large Marc Nunes}\par
\vspace{0.05in}
{\small AlphaNova \quad $\cdot$ \quad \href{mailto:marc.nunes@alphanova.tech}{marc.nunes@alphanova.tech}}\par
\vspace{0.25in}
{\small Research Note \quad $\cdot$ \quad Version 10 \quad $\cdot$ \quad September 2026}\par
\vfill
\begin{minipage}{0.94\textwidth}
\small
\textbf{Abstract.}
An ensemble of roughly 3,000 signals over 20 assets was reported to have approximately 90\% correlation with the leading component of the asset-space return structure. Does having about 158 signals per available linear dimension explain that alignment? The population answer depends on the research process's design distribution and its relation to returns: crowding alone imposes neither a nonzero mean nor agreement with a principal component. The population theory developed here distinguishes four objects: the equal-weight signal, signal principal components, equal-weight profit and loss (PnL), and PnL principal components. The return operator in the motivating observation is a separate object. Independent libraries converge to their design mean; exchangeable libraries can retain a random conditional mean. Cross-sectional signals over $d$ assets, once demeaned and normalized, lie on the unit sphere $S^{q-1}$ of a $q$-dimensional space, $q=d-1$. Under axial symmetry, their nonzero mean is signal-cloud PC1 exactly when its longitudinal second moment exceeds $1/q$, the isotropic energy share. Residual-and-gap bounds quantify approximate alignment. Combining design weights into signals contracts the tangent of their angle to PC1 to at most $\sqrt{\lambda_2/\lambda_1}$ times its value, where $\lambda_1>\lambda_2$ are the leading signal-kernel eigenvalues; the factor is sharp. A target-aligned frame separates transverse signal geometry, which PnL discards, from dispersion weighting and temporal centering, which also change the spectrum. A reproducible synthetic example illustrates the geometric threshold, and a proposed empirical program separates library growth from limited-history estimation. No market-data empirical results are presented.
\end{minipage}
\vfill
{\footnotesize\color{muted}Keywords: cross-sectional signals; equal weighting; principal components; kernel operators; information coefficient; PnL covariance; exchangeability; large signal libraries; equal-weight ensembles.}
\end{titlepage}
\hypersetup{pageanchor=true}
\setcounter{page}{1}

\section{The question and the answer}

This paper began with an observation: an equal-weight ensemble of roughly 3,000 signals across 20 assets closely aligned with the leading principal component of the asset-space return structure, at approximately 90\% correlation. In that market and at that horizon, the leading direction was identifiable as reversal. This reported observation motivates the theory; the paper contains no market-data empirical results. Its numerical example is explicitly synthetic.

After cross-sectional demeaning, only 19 linear dimensions remain: about 158 submitted signals per available dimension, with every date's $3{,}000\times3{,}000$ signal Gram matrix having rank at most 19. Does this crowding explain mean--PC1 alignment, or does the alignment reveal how the research process populated the sphere? This is a question about the relation between a library mean and a dominant eigendirection; reversal names the observed instance.

We use EWS for the equally weighted sum, taken in its averaged form $K^{-1}\sum_i S_i$ unless stated otherwise; IC for the information coefficient; and PnL for profit and loss. Here IC is realized cross-sectional correlation; PC1 denotes a leading principal component.

Suppose a research process produces signals $S_1,S_2,\ldots$, each a normalized cross-sectional portfolio over the same assets. The simplest ensemble is
\[
  \bar S_K=\frac1K\sum_{i=1}^K S_i.
\]
The limit requires a rule for adding designs. If every entry copies $S_0$, the average is $S_0$ for every $K$, and its signal-cloud second moment is already rank one. Mean--PC1 agreement is then trivial, with no diversification or new direction (\cref{ex:duplicates}).

The interesting question is when a library of varying signals has a stable average and when that average aligns with a leading principal component. We must also say which population supplies the component: signals sampled across designs, or returns observed across dates. Equal weighting measures a \emph{first moment}, the mean; PCA measures a \emph{second moment}, the energy carried by different directions. Increasing $K$ can improve estimates of the signal population, but it neither forces these objects to agree nor improves return-covariance estimation, which requires more history.

Under a suitable sampling law, the basic limit is
\[
  \bar S_K \longrightarrow \mu:=\E_{Z}[S_Z]
\]
where $Z$ is drawn from the specified population of signal designs. Independent sampling from a fixed distribution is one sufficient assumption; \cref{sec:growth-assumptions,sec:diversity} explain the role of dependence and library composition. Signal PC1 is stabilized instead by convergence of second moments and a positive leading eigengap. Neither condition forces the mean into that eigendirection. Agreement with signal-cloud PC1 also does not establish agreement with return PC1; \cref{sec:asset-pca,sec:return-pca} distinguish the operators.

PnL adds a second separation. At date $t$, let $Q_t$ be the normalized cross-sectional return direction and $a_t$ its magnitude. A normalized signal has realized IC $p_{Z,t}=\ip{S_{Z,t}}{Q_t}$ and PnL
\[
  \Pi_{Z,t}=a_t p_{Z,t}.
\]
PnL depends only on the signal's projection onto the return direction, multiplied by return magnitude. The component orthogonal to $Q_t$ contributes nothing to that date's PnL. Signal PCA and PnL PCA therefore describe different operators.

The results below identify the library limit, give exact and approximate conditions for mean--PC alignment, and track the passage from signal geometry to PnL. The roadmap in \cref{tab:taxonomy,tab:assumptions} collects the objects and assumptions. These are statements about population geometry, not a claim that equal weighting is optimal.

\subsection{Relation to existing theory}

The identity $\Pi=a\,\mathrm{IC}$ connects this question to the fundamental-law tradition in active management \cite{grinold1994,qianhua2004,dingmartin2017}. The moving-frame decomposition was developed for finite signal libraries in \cite{nunes2026}. Here the question is how averages and principal components behave as the library grows.

Equal-weight portfolios and forecast combinations motivate a related but different question: how estimation error affects the performance of simple and optimized weights \cite{demiguel2009,batesgranger1969,timmermann2006}. Here we ask what an expanding library averages and when that mean aligns with a principal component; our limit results do not rank trading or forecasting performance. Corrections for multiple testing in factor discovery and for data snooping matter because the research measure is selected using data \cite{harvey2016,white2000}. Reversal has its own empirical literature \cite{jegadeesh1990,lehmann1990}; those findings contextualize the observed factor label but do not establish the reported ensemble alignment.

Functional PCA provides the framework for treating whole signal histories as random elements of a Hilbert space \cite{dauxois1982,bosq2000}. Integral-operator approximation describes the limit of the library Gram matrix \cite{koltchinskii2000,rosasco2010}; Davis--Kahan perturbation theory controls changes in eigenvectors when an eigengap is present \cite{daviskahan1970}. Directional statistics provides the background for distributions on spheres \cite{mardia2000}.

Relative to \cite{nunes2026}, this paper moves from finite-library correlation identities to a specified law for adding signal designs. It identifies the resulting mean, including a random conditional mean for exchangeable libraries, compares that mean with signal and PnL principal components, and develops row-degree diagnostics whose bounds carry through to the combined signal or PnL history. The laws of large numbers, Gram/operator duality, and residual bounds are classical ingredients. The contribution here is their formulation for a growing research library, and the conditions and diagnostics that result. The sharp synthesis inequality in \cref{cor:synthesis-transport} makes explicit how a bound on design weights controls the ensemble itself.

\section{Finite cross-sections and an expanding library}\label{sec:growth-assumptions}

Let $d\ge 3$ be the number of assets. Demeaning subtracts each signal's cross-sectional average, imposing one linear constraint and leaving the zero-sum, or neutral, space
\[
 H=\{x\in\R^d:\one^\top x=0\},\qquad q:=\dim H=d-1.
\]
Normalizing a nonzero demeaned signal fixes its length at one. Its direction therefore lies on
\[
  S(H):=\{x\in H:\norm{x}=1\}\cong S^{q-1}=S^{d-2}.
\]
In the running example there are 20 assets. Demeaning leaves 19 linear dimensions ($q=19$); fixing the norm places the signals on the 18-dimensional sphere $S^{18}$ within that space. This modest dimension makes the longitudinal-energy threshold $1/q$ substantive, as the synthetic illustration in \cref{sec:date-alignment} shows. The theory keeps $q$ general. Compactness and unit norms bound the signals but do not specify how a research process distributes mass on the sphere.

Let $(\cT,\tau)$ represent dates or market states, where $\tau$ is a probability measure---either a population law or the uniform measure on a fixed observed history. We treat each signal's entire history as one element of the Hilbert space
\[
  \cH_S:=L^2(\cT,\tau;H),\qquad
  \ip{f}{g}_{\cH_S}:=\int_{\cT}\ip{f_t}{g_t}_H\,d\tau(t).
\]
The inner product averages the two signals' cross-sectional similarity over dates. Throughout, $f\otimes g$ denotes the rank-one operator $h\mapsto\ip{g}{h}f$.

Let $\mathcal Z$ be the set of signal designs and $\nu$ their sampling distribution. This distribution records how often different designs occur under the research and selection process. A design $z\in\mathcal Z$ produces a measurable history $S_z\in\cH_S$, with $\norm{S_{z,t}}=1$ almost everywhere. The growing library $Z_1,Z_2,\ldots$ samples these designs. Our baseline is independent and identically distributed (iid) sampling from a fixed $\nu$. Ergodic or weakly dependent sequences may replace it when their averages satisfy the same law of large numbers.

Three requirements remain separate: population moments must stabilize; dependence must permit concentration around the claimed limit; and a principal-direction claim needs suitable second-moment structure and an eigengap. Iid sampling from a fixed distribution supplies the first two, even if that distribution consists of a single design. It does not supply the third. A changing mix of research families can instead produce a changing mean.

Exchangeability is weaker than independence: the joint law is unchanged when the signal indices are permuted. It allows the library to share a random latent environment. As \cref{prop:definetti} shows, the resulting mean limit can be random even when every design has the same marginal law.

We take $\nu$ to be a probability measure and work with separable $L^2$ spaces and jointly measurable versions of $(z,t)\mapsto S_{z,t}$. These assumptions make the Bochner integrals and the conditional sampling statements below well-defined.

\begin{definition}[Library mean and equal-weight ensemble]
The population mean signal and the $K$-signal equal-weight ensemble are
\[
  \mu:=\int_{\mathcal Z}S_z\,d\nu(z),
  \qquad
  \mu_K:=\frac1K\sum_{i=1}^K S_{Z_i}.
\]
Both are elements of $\cH_S$. Pointwise, $\mu_K(t)$ is the unnormalized equally weighted cross-sectional signal at date $t$.
\end{definition}

The distinction between the average $\mu_K(t)$ and its renormalized direction is important. If $\mu(t)\ne0$, then continuity of $x\mapsto x/\norm{x}$ yields convergence of directions. If $\mu(t)=0$, normalization is unstable and there is no population direction to estimate.

\begin{theorem}[Equal-weight limit]\label{thm:lln}
If $S_{Z_i}$ are iid as $\cH_S$-valued random elements, then
\[
  \mu_K\longrightarrow\mu
  \quad\text{almost surely in }\cH_S.
\]
If dates are discrete or chosen pointwise versions are fixed, then for any $t$ at which an ordinary vector law of large numbers applies and $\mu(t)\ne0$,
\[
  \frac{\sum_{i=1}^K S_{Z_i,t}}{\norm{\sum_{i=1}^K S_{Z_i,t}}}
  \longrightarrow
  \frac{\mu(t)}{\norm{\mu(t)}}.
\]
\end{theorem}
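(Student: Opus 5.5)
The first claim will follow from the strong law of large numbers in a separable Hilbert space (Mourier's theorem): if $X_1,X_2,\ldots$ are iid Bochner-integrable random elements of a separable Banach space, then $K^{-1}\sum_{i\le K}X_i\to\E X_1$ almost surely in norm. I apply it with $X_i=S_{Z_i}\in\cH_S$.

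The first step is to check integrability. Since $\norm{S_{z,t}}=1$ for $\tau$-almost every $t$ and $\tau$ is a probability measure, every realized history satisfies $\norm{S_z}_{\cH_S}^2=\int_{\cT}\norm{S_{z,t}}^2\,d\tau(t)=1$. Hence $\E\norm{S_{Z_1}}_{\cH_S}=1<\infty$. Separability of $\cH_S$ and joint measurability of $(z,t)\mapsto S_{z,t}$, both assumed in \cref{sec:growth-assumptions}, make $S_{Z_1}$ a strongly measurable random element, so it is Bochner integrable.

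The second step is to identify the Bochner mean $\E S_{Z_1}$ with $\mu=\int_{\mathcal Z}S_z\,d\nu(z)$. This is immediate because $Z_1\sim\nu$ and the Bochner integral commutes with pushforward. If one prefers to avoid the Banach-space theorem, the Hilbert structure gives a short route instead. Let $Y_i=S_{Z_i}-\mu$. Then $\E\norm{K^{-1}\sum_{i\le K} Y_i}^2=K^{-1}\E\norm{Y_1}^2\le 4/K$, because the cross terms vanish by independence. Along the subsequence $K=m^2$ this gives almost-sure convergence via Borel--Cantelli and Chebyshev. The gaps between consecutive squares are then filled by the standard bound $\norm{\mu_K-\mu_{m^2}}\le 2(K-m^2)\cdot 2/K\lesssim 1/m$ for $m^2\le K<(m+1)^2$, which uses boundedness $\norm{Y_i}\le 2$.

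For the pointwise statement, fix $t$ at which the hypothesis grants a vector law of large numbers in the finite-dimensional space $H$. At such a $t$, $K^{-1}\sum_i S_{Z_i,t}\to\mu(t)$ almost surely. For discrete dates with $\tau(\{t\})>0$ this already follows from the $\cH_S$ convergence; otherwise it holds because the $S_{Z_i,t}$ are iid, bounded vectors under the fixed version. The normalized sum is unchanged by the factor $K^{-1}$, so it equals $\phi(\mu_K(t))$ with $\phi(x)=x/\norm{x}$. The map $\phi$ is continuous on $H\setminus\{0\}$, and $\mu(t)\neq0$, so $\mu_K(t)$ is eventually nonzero and $\phi(\mu_K(t))\to\phi(\mu(t))$ almost surely by the continuous mapping theorem.

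The main obstacle is not analytic but a matter of measure-theoretic bookkeeping. Elements of $\cH_S$ are equivalence classes, so norm convergence does not by itself give convergence at every individual date. This is exactly why the theorem restricts the second claim to discrete dates or fixed pointwise versions, and I would state that caveat explicitly rather than try to deduce pointwise convergence from the $L^2$ result.
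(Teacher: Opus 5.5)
Your proposal is correct and follows the paper's proof: the Hilbert-space strong law applies because unit history norms give $\E\norm{S_Z}_{\cH_S}=1$, and the continuous-mapping theorem at $\mu(t)\ne0$ handles the normalized pointwise claim. Your optional second-moment route is also valid and self-contained, since boundedness lets Chebyshev and Borel--Cantelli along $K=m^2$ plus the $O(1/m)$ gap bound replace the Banach-space strong law, but it is not needed here.
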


\begin{proof}
The Hilbert-space strong law requires a finite first moment. Here $\E\norm{S_Z}_{\cH_S}=1$, so it applies directly and gives the first claim. The second is the continuous-mapping theorem applied pointwise.
\end{proof}

\begin{proposition}[Exchangeable-library limit]\label{prop:definetti}
Suppose $(S_{Z_i})_{i\ge1}$ is an infinitely exchangeable sequence of $\cH_S$-valued random elements and let $\mathcal G$ be its exchangeable sigma-field. Unit history norms supply the required integrability automatically, and
\[
  \mu_K\longrightarrow \E[S_{Z_1}\mid\mathcal G]
  \quad\text{almost surely in }\cH_S.
\]
The limit is generally random. It equals the unconditional mean $\mu$ only when the directing measure has an almost-surely constant mean; iid sampling is the canonical special case.
\end{proposition}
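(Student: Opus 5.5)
The approach is to prove a Hilbert-space version of the de Finetti / reverse-martingale law of large numbers. The key object is the backward filtration
\[
\mathcal G_K := \sigma\bigl(\mu_K, \mu_{K+1}, \ldots\bigr),
\]
or, more precisely, the sigma-field of events invariant under permutations of the first $K$ indices. This filtration decreases to the exchangeable sigma-field $\mathcal G=\bigcap_K\mathcal G_K$.

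Integrability comes first. Since $\norm{S_{Z_i}}_{\cH_S}=1$ almost surely, $S_{Z_1}$ is Bochner integrable. Its conditional expectation given any sub-sigma-field therefore exists in the separable space $\cH_S$ and has norm at most one.

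The main identity is
\[
\mu_K=\E[S_{Z_1}\mid\mathcal G_K].
\]
Exchangeability gives $\E[S_{Z_i}\mid\mathcal G_K]=\E[S_{Z_1}\mid\mathcal G_K]$ for every $i\le K$, because swapping $i$ and $1$ leaves $\mathcal G_K$ and the joint law unchanged. Averaging over $i\le K$, and using that $\mu_K$ is $\mathcal G_K$-measurable, yields the identity. Hence $(\mu_K)$ is a reverse martingale in $\cH_S$.

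I would then invoke the reverse martingale convergence theorem for Hilbert-space-valued processes. If one prefers to avoid it, the scalar case can be reduced as follows. Fix an orthonormal basis $(e_j)$ of $\cH_S$. Each coordinate $\ip{\mu_K}{e_j}$ is a scalar reverse martingale and converges almost surely and in $L^1$ to $\ip{\E[S_{Z_1}\mid\mathcal G]}{e_j}$.

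Upgrading coordinatewise convergence to norm convergence is the main obstacle. I would handle it through $L^2$ orthogonality of reverse-martingale increments. The quantity $\E\norm{\mu_K-\mu_\infty}^2$ is a tail sum of squared increments whose total is bounded by $\E\norm{S_{Z_1}}^2=1$, so it tends to zero. Almost-sure convergence then follows from Doob's maximal inequality applied to the real submartingale $\norm{\mu_K-\mu_\infty}$, which is a submartingale because the norm is convex. Concretely, I would apply it to $\norm{\mu_K-\mu_N}$ for $K\ge N$ and let $N\to\infty$.

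For the final clauses, note that $\E[S_{Z_1}\mid\mathcal G]$ is the mean of the directing random measure given by de Finetti's theorem, which holds in the Polish space $\cH_S$. The tower property gives $\E[\E[S_{Z_1}\mid\mathcal G]]=\mu$. Consequently the limit is almost surely constant, and then equal to $\mu$, exactly when the directing measure's mean is almost surely constant.

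In the iid case the Hewitt--Savage zero-one law makes $\mathcal G$ trivial. This recovers \cref{thm:lln}.

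To show that the limit is generally random, a two-component mixture suffices. Draw a latent $\theta\in\{\pm v\}$ and then take $S_{Z_i}=\theta$ for all $i$. The limit is $\theta$, while $\mu=0$.
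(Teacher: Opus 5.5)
Your proof is correct, but it takes a genuinely different route from the paper's.

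\textbf{The paper's route.} It starts from de Finetti's theorem: conditional on the directing random measure the sequence is iid. The Hilbert-space strong law behind \cref{thm:lln}, applied under that conditional law, gives convergence to the directing measure's mean, which is then identified with $\E[S_{Z_1}\mid\mathcal G]$.

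\textbf{Your route.} You obtain convergence directly from the reverse martingale $\mu_K=\E[S_{Z_1}\mid\mathcal G_K]$ over the $K$-symmetric sigma-fields. You prove the backward convergence theorem in $\cH_S$ by orthogonality of increments and Doob's inequality. De Finetti enters only afterwards, to interpret the limit.

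\textbf{What each buys.} The paper's route is shorter and makes the ``shared environment, then iid designs'' reading immediate. It does, however, rely tacitly on a regular conditional law on the Polish space $\cH_S$ and on integrating conditional almost-sure statements. Yours is self-contained and gives $L^2$ convergence as a by-product; it is the standard reverse-martingale argument that underlies de Finetti's theorem itself. Your tower-property step also makes the final clauses an explicit equivalence, which the paper's proof leaves implicit.

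\textbf{One small slip.} For a fixed smaller index $N$, the sequence $(\mu_K-\mu_N)_{K\ge N}$ is not a reverse martingale. Indeed $\E[\mu_K-\mu_N\mid\mathcal G_{K+1}]=\mu_{K+1}-\mu_{K+1}=0$, and $\mu_N$ is not $\mathcal G_{K+1}$-measurable. There are two correct ways to apply the maximal inequality:
\begin{itemize}
\item apply it to $\norm{\mu_K-\mu_\infty}$, as in your first formulation; or
\item fix the coarser index $M$ and use $\Pr(\max_{N\le K\le M}\norm{\mu_K-\mu_M}>\epsilon)\le\E\norm{\mu_N-\mu_M}/\epsilon$, which combines with $L^2$-Cauchyness to give almost-sure convergence.
\end{itemize}
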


\begin{proof}
Conditional on the de Finetti directing measure, the sequence is iid. The conditional Hilbert-space strong law gives convergence to the directing measure's mean, which is $\E[S_{Z_1}\mid\mathcal G]$; see \cite{kallenberg2005} for the general exchangeability framework.
\end{proof}

In practical terms, an exchangeable research process may first select a shared research environment and then produce designs within it. The infinite EWS records the mean within that environment, which need not be the unconditional mean across all possible environments.

These results identify the limit. Whether it is nonzero, economically useful, or a leading principal direction of signals or returns remains a separate question.

\subsection{Objects and assumptions at a glance}

\Cref{tab:taxonomy} is the four-object roadmap. In the notation below, $Z\sim\nu$ averages designs, whereas $t\sim\tau$ averages dates. The operator names record both the observation space and the averaging population:
\begin{itemize}
  \item $M_t$ and $C_t$: one date's signal second moment and covariance on the asset space $H$; $M_{\mathrm{ret}}$ and $C_{\mathrm{ret}}$: return second moment and covariance averaged over dates.
  \item $A_{\sig}$ and $A_{\pnl}$: second moments of whole signal histories and centered payoff histories. Their Gram counterparts $T_{\sig}$ and $T_{\pnl}$ act on design weights and have the same nonzero spectra.
  \item $\Phi_S$ and $\Phi_\Pi$: maps that combine design weights into signal and centered payoff histories. $L$ projects signals onto returns and scales by return magnitude; $\mathcal L=P_0L$ also centers through time, producing features $\Psi_z=P_0\Pi_z$.
\end{itemize}
These maps are defined in \cref{sec:asset-pca,sec:return-pca,sec:pnl-map}. \Cref{tab:assumptions} separates assumptions that give a limit from those that make it a principal direction. In particular, a library law of large numbers supplies no link to return PC1.

\begin{table}[!htbp]
\centering
\caption{The four large-library objects and what determines them.}
\label{tab:taxonomy}
\small
\begin{tabularx}{\textwidth}{@{}>{\bfseries\raggedright\arraybackslash}p{0.22\textwidth}X X@{}}
\toprule
Object & Population limit & Information retained \\
\midrule
Equal-weight signal & $\mu(t)=\E_ZS_{Z,t}$ & First moment; longitudinal and mean transverse components \\
Signal PCA & $T_{\sig}$ or $A_{\sig}$ & Second-order longitudinal plus transverse similarity \\
Equal-weight PnL & $a_t\E_Zp_{Z,t}$ & First moment of realized IC, scaled by dispersion \\
PnL PCA & $T_{\pnl}$ or $A_{\pnl}$ & Centered, dispersion-weighted longitudinal dependence only \\
\bottomrule
\end{tabularx}
\end{table}

\begin{table}[!htbp]
\centering
\caption{Assumption roadmap. Every PCA claim refers to the specified operator.}
\label{tab:assumptions}
\small
\begin{tabularx}{\textwidth}{@{}>{\raggedright\arraybackslash}p{.31\textwidth}X@{}}
\toprule
Claim & What must additionally hold \\
\midrule
EWS has a stable limit & A library law of large numbers; a deterministic mean additionally excludes variation in the directing mean (\cref{thm:lln,prop:definetti}). \\
Normalized EWS converges & A pointwise law of large numbers and a nonzero limiting mean. \\
Sample PC1 converges & Operator convergence and a simple leading eigenvalue (\cref{thm:opconv}). \\
Mean is signal-cloud PC1 & Zero transverse residual and strict dominance (\cref{thm:meanpc}); under axial symmetry, longitudinal energy above $1/q$ (\cref{cor:axial}). \\
EWS approximately aligns with signal PC1 & Row-degree residual small relative to its gap, followed by synthesis into the relevant signal space (\cref{lem:approxcentrality,cor:synthesis-transport}). \\
Mean aligns with the target axis & Mean transverse component vanishes and mean IC is nonzero; coordinates alone do not ensure either. \\
Signal PC1 maps to PnL PC1 & Bounded payoff map, compatible eigenbasis, and product-eigenvalue dominance (\cref{prop:payoff-transport}). \\
Mean aligns with return PC1 & A further relation between the design mean and the chosen return operator; this is not a library-size consequence. \\
\bottomrule
\end{tabularx}
\end{table}

\FloatBarrier
\section{Why signal count is not effective diversity}\label{sec:diversity}

Signals produced by the same pipeline often share much of their variation. Counting them as separate submissions does not make that variation disappear when they are averaged.

At a fixed date, write $X_i=S_{Z_i,t}-\E S_{Z_i,t}$ and let
\[
  \Gamma_{ij}:=\E\ip{X_i}{X_j}.
\]
Then the exact mean-square fluctuation of the library average is
\begin{equation}\label{eq:mse}
  \E\norm{\mu_K(t)-\E\mu_K(t)}^2
  =\frac1{K^2}\sum_{i,j=1}^K\Gamma_{ij}.
\end{equation}
Equation \eqref{eq:mse} makes the averaging requirement precise: fluctuations vanish in mean square exactly when the summed covariances grow more slowly than $K^2$. Convergence to a particular deterministic mean $\mu(t)$ also requires $\E\mu_K(t)\to\mu(t)$. Thus stable composition and enough cancellation are separate assumptions; neither follows from the count alone.

If $\Gamma_{ii}=\sigma^2$ and $\Gamma_{ij}=\rho\sigma^2$ for $i\ne j$, then
\begin{equation}\label{eq:equicorr}
  \E\norm{\mu_K(t)-\mu(t)}^2
  =\sigma^2\left(\rho+\frac{1-\rho}{K}\right),
\end{equation}
where the second display assumes identical marginal laws, so $\E\mu_K(t)=\mu(t)$. For $\rho>0$, the averaging error has a nonzero floor. Equicorrelation is consistent with exchangeability but does not imply it. In the exchangeable case, letting $B_t=\E[S_{Z_1,t}\mid\mathcal G]$, conditional independence gives, for $i\ne j$,
\[
  \Gamma_{ij}=\E\norm{B_t-\mu(t)}^2.
\]
The error floor therefore comes from variation in the conditional \emph{mean}. The directing distribution may itself be random without producing a floor if its mean is always the same. In the equicorrelation model, a convenient effective count is
\[
  K_{\mathrm{eff}}
  :=\frac{K}{1+(K-1)\rho},
\]
which tends to $1/\rho$ when $\rho>0$. Beyond that point, more submissions bring little reduction in mean-square error.

\begin{example}[Exact duplication]\label{ex:duplicates}
If $S_1=S_2=\cdots=S_0$, then $\mu_K=S_0$ and $A_{\sig,K}=S_0\otimes S_0$ for every $K$. Both the average and its rank-one second moment stay unchanged. EWS and signal PC1 agree from the first observation; no diversification or new direction appears as the library grows. This does not contradict a law of large numbers. For a fixed $S_0$, the sampling distribution is simply concentrated on one design. If all entries copy the same random $S_0$, their average instead retains that randomness and need not converge to $\E S_0$.
\end{example}

\begin{example}[Families with unequal multiplicity]
Suppose there are $J$ distinct research families with family means $m_1,\ldots,m_J$, and $n_j(K)$ variants from family $j$. If the within-family averages converge to $m_j$ for every family with positive limiting weight and $n_j(K)/K\to w_j$, then
\[
  \mu_K\longrightarrow\sum_{j=1}^J w_jm_j.
\]
The weights $w_j$ record the relative submission frequencies of the families. Increasing one family's share changes the limit even if the added submissions are copies.
\end{example}

For empirical work, report growth by research family, cluster, or generator as well as by raw signal count. Such groups are candidates for meaningful sampling units, but their labels alone do not establish independence. The dependence that remains between groups must also be assessed.

\section{Signal PCA and the separate return operator}

The phrase ``PCA of the signals'' can refer to different operators. Distinguishing them prevents dimensional and interpretive errors.

\subsection{Asset-space PCA at one date}\label{sec:asset-pca}

At a fixed date define the uncentered population second moment
\[
  M_t:=\int_{\mathcal Z}S_{z,t}\otimes S_{z,t}\,d\nu(z):H\to H,
\]
and its empirical version $M_{K,t}=K^{-1}\sum_iS_{Z_i,t}\otimes S_{Z_i,t}$. Since $H$ has dimension $q$, this is an ordinary $q\times q$ matrix. The centered covariance is
\[
  C_t=M_t-\mu_t\otimes\mu_t,
  \qquad \mu_t:=\int S_{z,t}\,d\nu(z).
\]
Uncentered PCA detects directions carrying signal energy relative to the origin; centered PCA detects dispersion around the library mean. We study uncentered alignment first, but centering changes dominance rather than the eigenvector compatibility condition. For $m_t=\mu_t/\norm{\mu_t}$ and $P_t=I-m_tm_t^\top$,
\[
 P_tC_tm_t=P_tM_tm_t,\qquad P_tC_tP_t=P_tM_tP_t.
\]
Thus the same residual determines whether $m_t$ is an eigenvector of either operator. When it is, centering subtracts $\norm{\mu_t}^2$ from its eigenvalue and leaves the transverse block unchanged. The strict dominance condition below must therefore subtract $\norm{\mu_t}^2$ on its left side when applied to $C_t$. A mean direction can be uncentered PC1 and fail to lead after centering.

\begin{theorem}[Exact mean--PC condition]\label{thm:meanpc}
Fix a date with $\mu_t\ne0$ and put $m_t=\mu_t/\norm{\mu_t}$. Then $m_t$ is the unique leading eigenvector of $M_t$, up to sign, if and only if
\begin{align}
  &(I-m_tm_t^\top)M_tm_t=0,\label{eq:eigcondition}\\
  &m_t^\top M_tm_t>
    \lambda_{\max}\!\left((I-m_tm_t^\top)M_t(I-m_tm_t^\top)\right).
    \label{eq:gapcondition}
\end{align}
\end{theorem}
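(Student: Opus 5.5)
The plan is a direct linear-algebra argument on the finite-dimensional space $H$, using only that $M_t$ is symmetric positive semidefinite. Write $m=m_t$, $P=I-mm^\top$, and decompose $H=\spanop\{m\}\oplus m^\perp$. In this decomposition $M_t$ has the block form
\[
M_t=\begin{pmatrix}\alpha & b^\top\\ b & B\end{pmatrix},
\qquad \alpha=m^\top M_tm,\quad b=PM_tm,\quad B=PM_tP\big|_{m^\perp}.
\]
Condition \eqref{eq:eigcondition} says $b=0$. Condition \eqref{eq:gapcondition} says $\alpha>\lambda_{\max}(B)$. The one subtlety is that $PM_tP$, viewed on all of $H$, also has the eigenvalue $0$ on $m$. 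Because $M_t\succeq0$, we have $\lambda_{\max}(PM_tP)=\max(\lambda_{\max}(B),0)=\lambda_{\max}(B)$ whenever $q\ge2$, so both readings of \eqref{eq:gapcondition} agree.

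For sufficiency, assume $b=0$. Then $M_tm=\alpha m$, so $m$ is an eigenvector with eigenvalue $\alpha$. Moreover $M_t$ is block diagonal, so its spectrum is $\{\alpha\}\cup\operatorname{spec}(B)$. Strict dominance $\alpha>\lambda_{\max}(B)$ then makes $\alpha$ a simple top eigenvalue with eigenspace $\spanop\{m\}$. Hence $m$ is the unique leading eigenvector up to sign.

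For necessity, suppose $m$ is the unique leading eigenvector. Then $M_tm=\lambda_1m$, and applying $P$ gives $PM_tm=\lambda_1Pm=0$, which is \eqref{eq:eigcondition}. The matrix is now block diagonal, so the remaining eigenvalues are exactly those of $B$. Uniqueness of the leading eigenvector means $\lambda_1$ is simple, so every eigenvalue of $B$ is strictly below $\lambda_1=\alpha$, which is \eqref{eq:gapcondition}.

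There is no real obstacle here; the proof reduces to the spectral theorem for a block-diagonal symmetric matrix. The only care needed is in \eqref{eq:gapcondition}. There one should interpret $\lambda_{\max}$ on the restriction to $m^\perp$, or note as above that positive semidefiniteness makes the spurious zero eigenvalue on $\spanop\{m\}$ irrelevant. Note also that the hypothesis $\mu_t\ne0$ is used only to define $m$; the argument never invokes the mean structure itself. This is consistent with the paper's point that the first moment is not forced into the second-moment eigendirection.
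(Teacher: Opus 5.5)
Your proof is correct and follows the paper's route: \eqref{eq:eigcondition} makes $\spanop(m_t)\oplus m_t^\perp$ reduce $M_t$, and \eqref{eq:gapcondition} then says the eigenvalue on $m_t$ strictly exceeds every eigenvalue of the orthogonal block. You also spell out two things the paper leaves implicit: the necessity direction, and the positive-semidefiniteness remark showing that the extra zero eigenvalue of $(I-m_tm_t^\top)M_t(I-m_tm_t^\top)$ on $\spanop(m_t)$ does not affect the gap condition.
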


\begin{proof}
Equation \eqref{eq:eigcondition} is exactly the condition that $M_tm_t$ have no component orthogonal to $m_t$. Under it, $H=\spanop(m_t)\oplus m_t^\perp$ reduces $M_t$. Inequality \eqref{eq:gapcondition} says the eigenvalue on $m_t$ strictly exceeds every eigenvalue on the orthogonal block.
\end{proof}

Writing $S_{z,t}=a_zm_t+U_z$ with $U_z\perp m_t$, condition \eqref{eq:eigcondition} becomes $\E[a_zU_z]=0$, and \eqref{eq:gapcondition} becomes
\[
  \E[a_z^2]>\lambda_{\max}(\E[U_z\otimes U_z]).
\]
The first condition removes coupling between the mean direction and transverse variation. The second requires the mean direction to carry more energy than any transverse direction. Axial symmetry is one way to remove the coupling, but the conditions do not require that symmetry.

\begin{corollary}[Axially symmetric population on $S^{q-1}$]\label{cor:axial}
Let $q=\dim H\ge2$ and $\mu_t\ne0$. Suppose the date-$t$ population is invariant under every orthogonal transformation of $H$ fixing $m_t=\mu_t/\norm{\mu_t}$. Then $m_t$ is the unique PC1 of $M_t$, up to sign, exactly when
\[
  \E[(m_t^\top S_{Z,t})^2]>\frac1q.
\]
\end{corollary}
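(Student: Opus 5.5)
The plan is to reduce the corollary to \cref{thm:meanpc} by checking its two conditions under axial symmetry. Write $S=S_{Z,t}=a\,m_t+U$ with $a=m_t^\top S$ and $U\perp m_t$, as in the discussion after \cref{thm:meanpc}. Let $G$ be the group of orthogonal maps of $H$ that fix $m_t$. These are exactly $O(m_t^\perp)$ acting on the transverse part, so the invariance hypothesis says that $(a,U)$ and $(a,gU)$ have the same law for every $g\in O(m_t^\perp)$. Equivariance of the second moment then gives $gM_tg^\top=M_t$ for all $g\in G$.

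For condition \eqref{eq:eigcondition} we need $\E[aU]=0$. Take $g=-I$ on $m_t^\perp$ (and $+1$ on $m_t$). This reflection lies in $G$, so $(a,U)\overset{d}{=}(a,-U)$ and hence $\E[aU]=-\E[aU]=0$. This step uses $q\ge2$ only to make sense of $m_t^\perp$; if $q=1$ the statement is trivial anyway.

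Next we identify the transverse block. The operator $B:=\E[U\otimes U]$ on $m_t^\perp$, which has dimension $q-1\ge1$, commutes with every element of $O(m_t^\perp)$. Since $O(q-1)$ acts irreducibly on $\R^{q-1}$, Schur's lemma gives $B=\beta I$. A more elementary route: $v^\top Bv$ is constant over unit vectors $v$ by transitivity of the orthogonal group on the sphere, and polarization then shows $B$ is scalar. Taking traces with $\norm{S}=1$ gives
\[
(q-1)\beta=\E\norm{U}^2=1-\E[a^2],\qquad\text{so}\qquad \beta=\frac{1-\E[a^2]}{q-1}.
\]

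Condition \eqref{eq:gapcondition} now reads $\E[a^2]>\beta$, that is,
\[
(q-1)\E[a^2]>1-\E[a^2],
\]
which is equivalent to $\E[a^2]>1/q$. \Cref{thm:meanpc} then gives both directions of the "exactly when": if the inequality fails, \eqref{eq:gapcondition} fails, and either there is a tie, so PC1 is not unique, or a transverse direction leads.

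The only delicate step is the Schur/isotropy argument for $B$. It needs care in the degenerate case $q=2$, where $m_t^\perp$ is one-dimensional and $B$ is trivially scalar, so nothing is lost. It also requires that invariance of the law of $S$ actually transfers to $(a,U)$. That holds because $g\in G$ fixes $a=m_t^\top S$ and acts linearly on $U$.
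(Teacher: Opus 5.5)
Your proof is correct and follows the paper's own route. You decompose $S=a\,m_t+U$, use the invariance to obtain $\E[aU]=0$ and $\E[U\otimes U]=\lambda_\perp(I-m_tm_t^\top)$, take traces in $a^2+\norm{U}^2=1$ to get $(q-1)\lambda_\perp=1-\E[a^2]$, and reduce the gap condition of \cref{thm:meanpc} to $\E[a^2]>1/q$. The paper asserts the two symmetry consequences in one line, and you supply their justification: the transverse reflection for the cross term, and Schur's lemma or transitivity with polarization for the isotropic block.
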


\begin{proof}
The assumed invariance gives $\E[aU]=0$ and $\E[U\otimes U]=\lambda_\perp(I-m_tm_t^\top)$. Taking traces in $a^2+\norm{U}^2=1$ yields $(q-1)\lambda_\perp=1-\E[a^2]$. Longitudinal dominance is $\E[a^2]>\lambda_\perp$, which rearranges to the stated threshold. The orthogonal-group formulation includes the transverse reflection needed when $q=2$.
\end{proof}

Since $\E_Za_Z=\norm{\mu_t}$, the exact variance decomposition gives
\[
  \E_Z[a_Z^2]=\Var_Z(a_Z)+\norm{\mu_t}^2
  =\Var_Z(a_Z)+\E_{Z,Z'}\ip{S_{Z,t}}{S_{Z',t}},
\]
where $Z,Z'$ are independent population draws. Hence under axial symmetry the unique-PC1 condition is exactly
\begin{equation}\label{eq:axial-variance}
  \Var_Z(a_Z)+\E_{Z,Z'}\ip{S_{Z,t}}{S_{Z',t}}>\frac1q.
\end{equation}
The threshold $1/q$ is the energy share of any direction under the uniform distribution on $S^{q-1}$. Under axial symmetry, the mean direction wins exactly when it carries more than this isotropic share. Equality makes the second moment isotropic; a smaller longitudinal share makes the transverse block dominant, and can also describe an anisotropic cloud. Thus the criterion is excess longitudinal energy, not merely departure from isotropy.

Mean pairwise correlation above $1/q$ is sufficient, but longitudinal variance can also carry the sum over the threshold. As $q$ increases, more transverse directions share the remaining energy, so the threshold falls. It becomes easier to exceed if longitudinal concentration stays bounded away from zero. If that concentration falls at the same rate as $1/q$, dimension alone does not settle the comparison. Throughout this corollary, symmetry is about the \emph{mean-signal pole} $m_t$; the realized-return pole $Q_t$ is a different direction.

\subsection{Return PCA averages a different population}\label{sec:return-pca}
For demeaned realized returns $\widetilde R_t=a_tQ_t$ with $\E_\tau a_t^2<\infty$, define the return second moment and temporal covariance on the same asset space $H$:
\[
 M_{\mathrm{ret}}:=\E_\tau[a_t^2Q_t\otimes Q_t],\qquad
 C_{\mathrm{ret}}:=M_{\mathrm{ret}}-\bar R\otimes\bar R,
 \quad \bar R:=\E_\tau[a_tQ_t].
\]
Cross-sectional demeaning does not force the temporal mean $\bar R$ to vanish, so these operators coincide only when $\bar R=0$. Write $v^{\mathrm{ret}}_1$ for a unit leading eigenvector of the chosen return operator when its leading eigenvalue is simple. The motivating observation concerns this return-derived direction. The signal operator $M_t$ instead averages designs at one date. \Cref{thm:meanpc,cor:axial} characterize alignment with PC1 of $M_t$; neither is a theorem about $M_{\mathrm{ret}}$ or $C_{\mathrm{ret}}$.

The mean signal enters realized returns through the equal-weight PnL limit $a_t\ip{\mu_t}{Q_t}$ in \cref{prop:ewpnl}. This scalar overlap does not imply alignment with return PC1. An empirical comparison must specify temporal centering and the return-estimation window: a fixed window gives a fixed $v^{\mathrm{ret}}_1$, while rolling windows produce a sequence of directions.

\subsection{Process-space PCA}

Across all dates, define
\[
  A_{\sig}:=\int_{\mathcal Z}S_z\otimes S_z\,d\nu(z):\cH_S\to\cH_S.
\]
This positive trace-class operator treats each entire signal history as one observation for uncentered PCA. Its centered version is $A_{\sig}-\mu\otimes\mu$. Unlike $M_t$, it can capture how signal directions change together over time.

\subsection{Library-kernel PCA}

Define the signal-similarity kernel
\[
  k_{\sig}(z,z'):=\ip{S_z}{S_{z'}}_{\cH_S}
  =\int_{\cT}\ip{S_{z,t}}{S_{z',t}}\,d\tau(t),
\]
and the integral operator on $L^2(\nu)$,
\[
  (T_{\sig}f)(z):=\int_{\mathcal Z}k_{\sig}(z,z')f(z')\,d\nu(z').
\]
The kernel is the average similarity between two signal histories. The operator $T_{\sig}$ is the population counterpart of the scaled Gram matrix $K^{-1}[k_{\sig}(Z_i,Z_j)]$. Its eigenfunctions assign weights to designs, rather than to assets.

To turn a weight function into a combined signal, introduce the synthesis operator
\[
  \Phi_S:L^2(\nu)\to\cH_S,
  \qquad \Phi_Sf=\int_{\mathcal Z}f(z)S_z\,d\nu(z).
\]
Then
\begin{equation}\label{eq:duality}
  T_{\sig}=\Phi_S^*\Phi_S,
  \qquad A_{\sig}=\Phi_S\Phi_S^*.
\end{equation}
The two operators have the same nonzero eigenvalues, with eigenvectors transported by $\Phi_S$ or $\Phi_S^*$. This is the infinite-library form of the familiar $X^\top X$ versus $XX^\top$ duality used in kernel PCA \cite{scholkopf1998}.

\begin{proposition}[When equal weights are a library PC]\label{prop:rowsum}
The constant weight function $\one$ is an eigenfunction of $T_{\sig}$ if and only if
\begin{equation}\label{eq:degree}
  d_{\sig}(z):=\int_{\mathcal Z}k_{\sig}(z,z')\,d\nu(z')
\end{equation}
is constant for $\nu$-almost every $z$. If that constant eigenvalue strictly exceeds the upper spectral edge $\sup\sigma(T_{\sig}|_{\one^\perp})$ (taken as zero if $\one^\perp=\{0\}$), then equal weights are the unique leading library PC. In that case $\mu=\Phi_S\one$ is a leading eigenvector of $A_{\sig}$ whenever $\mu\ne0$.
\end{proposition}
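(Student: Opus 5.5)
The plan has three steps: compute $T_{\sig}\one$ directly, use self-adjointness of $T_{\sig}$ to split $L^2(\nu)$ into $\spanop(\one)$ and $\one^\perp$, and transport the eigenfunction to $A_{\sig}$ through the duality \eqref{eq:duality}.

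\textbf{Eigenfunction criterion.} By definition, $(T_{\sig}\one)(z)=\int k_{\sig}(z,z')\,d\nu(z')=d_{\sig}(z)$, so $T_{\sig}\one=d_{\sig}$ as elements of $L^2(\nu)$. The kernel is bounded, since $|k_{\sig}|\le 1$ by Cauchy--Schwarz and unit history norms. Hence $d_{\sig}\in L^{\infty}\subset L^2(\nu)$, and $T_{\sig}$ is a bounded, self-adjoint, positive operator; it is in fact trace class, because $T_{\sig}=\Phi_S^*\Phi_S$ with $\Phi_S$ Hilbert--Schmidt. Now $\one$ is an eigenfunction iff $T_{\sig}\one=c\one$ in $L^2(\nu)$, that is, iff $d_{\sig}=c$ $\nu$-almost everywhere. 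The eigenvalue is then $c=\int d_{\sig}\,d\nu=\int\!\!\int k_{\sig}\,d\nu\,d\nu=\norm{\Phi_S\one}^2=\norm{\mu}^2$.

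\textbf{Uniqueness of the leading PC.} If $\one$ is an eigenvector, self-adjointness makes $\one^\perp$ invariant, so $T_{\sig}=c\,P_{\one}\oplus T_{\sig}|_{\one^\perp}$. The spectrum of $T_{\sig}$ is therefore $\{c\}\cup\sigma(T_{\sig}|_{\one^\perp})$. Now suppose $c>s:=\sup\sigma(T_{\sig}|_{\one^\perp})$. Then $c=\norm{T_{\sig}}$ is the top of the spectrum. Any eigenfunction $f$ with eigenvalue $c$ decomposes as $f=\alpha\one+g$ with $g\perp\one$, and then $T_{\sig}g=cg$. Since $c\notin\sigma(T_{\sig}|_{\one^\perp})$, this forces $g=0$. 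So the top eigenspace is $\spanop(\one)$, which is exactly the statement that equal weights are the unique leading library PC, up to scale and sign. The Rayleigh-quotient form of the same fact is $\ip{f}{T_{\sig}f}\le c|\alpha|^2+s\norm{g}^2$, with equality at the top only when $g=0$. If $\one^\perp=\{0\}$, then $\nu$ is a point mass and the statement is trivial. The one point needing care is that $\one\perp\one^\perp$ but $\one$ may lie in $\ker\Phi_S$; the strict inequality $c>s\ge0$ handles this, since it gives $c>0$ and hence $\mu\ne0$ automatically in this case.

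\textbf{Transport to $A_{\sig}$.} By \eqref{eq:duality}, $A_{\sig}\Phi_S\one=\Phi_S T_{\sig}\one=c\,\Phi_S\one$. Since $\mu=\Phi_S\one=\int S_z\,d\nu$ by definition, $\mu$ is an eigenvector of $A_{\sig}$ with eigenvalue $c$ whenever $\mu\ne0$. It is a leading one because the nonzero spectra of $\Phi_S^*\Phi_S$ and $\Phi_S\Phi_S^*$ coincide, so $\norm{A_{\sig}}=\norm{T_{\sig}}=c$. Uniqueness also transports. The map $\Phi_S$ sends the $c$-eigenspace of $T_{\sig}$ bijectively onto the $c$-eigenspace of $A_{\sig}$, with inverse $c^{-1}\Phi_S^*$, so that eigenspace is $\spanop(\mu)$.

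The main obstacle is bookkeeping rather than depth: checking the $L^2$ and almost-everywhere formulations, the self-adjointness needed for the reducing decomposition, and the degenerate case $\one^\perp=\{0\}$.
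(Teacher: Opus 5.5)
Your proof is correct and takes the same route as the paper's. You use $T_{\sig}\one=d_{\sig}$ for the eigenfunction criterion, the dominance hypothesis (through the reducing splitting $\spanop(\one)\oplus\one^\perp$) for uniqueness, and $A_{\sig}\Phi_S=\Phi_S T_{\sig}$ to carry $\one$ to $\mu$. Your added points are valid refinements that the paper leaves implicit: $c=\norm{\mu}^2>s\ge 0$ makes the hypothesis $\mu\ne0$ automatic, and $\Phi_S$ maps the top eigenspace of $T_{\sig}$ bijectively onto that of $A_{\sig}$, so $\mu$ is in fact the unique leading eigenvector.
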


\begin{proof}
By definition, $T_{\sig}\one=d_{\sig}$. The first statement follows. The dominance condition makes $\one$ the unique top eigenfunction. Finally, $A_{\sig}\mu=\Phi_ST_{\sig}\one=\lambda\Phi_S\one=\lambda\mu$.
\end{proof}

Condition \eqref{eq:degree} says that every design has the same average similarity to the population. We call this \emph{equal centrality}. The next result measures how closely equal weights align with PC1 when the row sums are nearly, but not exactly, constant. The size of the discrepancy matters relative to the spectral gap. The bound is the classical Rayleigh-quotient residual estimate; see, for example, Saad \cite[Section~3.2.2]{saad2011} for Hermitian matrices. We give the compact self-adjoint argument and its kernel-degree interpretation here.

\begin{lemma}[Residual bound and approximate equal centrality]\label{lem:approxcentrality}
Let $T$ be a compact self-adjoint operator on a separable Hilbert space of dimension at least two, let $u$ be a unit vector, and set
\[
  \theta=\ip{u}{Tu},\qquad r=Tu-\theta u,
  \qquad \varepsilon=\norm{r}.
\]
Assume $T$ has a simple largest eigenvalue $\lambda_1$ with unit eigenvector $v_1$. Eigenvalues are compared by value, not modulus. Write $\lambda_2(T)=\sup\sigma(T|_{v_1^\perp})$ for the upper edge of the remaining spectrum, including zero when appropriate. If $\theta>\lambda_2(T)$, then
\begin{equation}\label{eq:approxalign}
  \sin\angle(u,v_1)
  \le \frac{\varepsilon}{\theta-\lambda_2(T)}.
\end{equation}
For $T=T_{\sig}$ and $u=\one$ (recall that $\nu$ is a probability measure), $Tu=d_{\sig}$, $\theta=\int d_{\sig}\,d\nu=\norm{\Phi_S\one}_{\cH_S}^2=\norm{\mu}_{\cH_S}^2\ge0$, and $\varepsilon$ is the $L^2(\nu)$ standard deviation of the kernel degree. Thus
\[
  \sin\angle(\one,v_1)
  \le
  \frac{\theta\,\mathrm{CV}(d_{\sig})}{\theta-\lambda_2(T_{\sig})}.
\]
The mean row degree is therefore the squared norm of the mean signal history. The same statement holds for $T_{\pnl}$, with $\theta=\norm{\Phi_\Pi\one}_{\cH_\Pi}^2\ge0$. In either case the gap hypothesis makes $\theta>0$, so CV has a positive denominator.
\end{lemma}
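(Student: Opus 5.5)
The plan is to decompose $u$ along the simple top eigenvector and its orthogonal complement, and to use the spectral theorem for compact self-adjoint operators on $v_1^\perp$. Write $u=\cos\phi\,v_1+\sin\phi\,w$ with $w\perp v_1$ a unit vector and $\phi=\angle(u,v_1)$, choosing the sign of $v_1$ so that $\cos\phi\ge0$. The projection $P=I-v_1\otimes v_1$ commutes with $T$ because $v_1$ is an eigenvector, so $PTu=TPu=\sin\phi\,Tw$. Applying $P$ to the residual gives
\[
Pr=PTu-\theta Pu=\sin\phi\,(T-\theta)w,
\]
and therefore $\varepsilon=\norm{r}\ge\norm{Pr}=\sin\phi\,\norm{(T-\theta)w}$. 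The key step is a lower bound on $\norm{(T-\theta)w}$ for $w$ in the invariant subspace $v_1^\perp$. There $T|_{v_1^\perp}$ is self-adjoint with spectrum contained in $(-\infty,\lambda_2(T)]$. Since $\theta>\lambda_2(T)$, the operator $\theta-T$ restricted to $v_1^\perp$ is bounded below by $\theta-\lambda_2(T)>0$. This follows from the spectral theorem: $\ip{w}{(\theta-T)w}\ge(\theta-\lambda_2)\norm{w}^2$, and Cauchy--Schwarz then gives $\norm{(T-\theta)w}\ge\theta-\lambda_2$. Dividing yields \eqref{eq:approxalign}. When $\sin\phi=0$ the bound is trivial. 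The main point needing care is the definition of $\lambda_2$ as a supremum of spectrum, which includes $0$ when $v_1^\perp$ is infinite-dimensional. The numerical-range argument above handles this without requiring that the supremum be attained, so I do not expect a real obstacle.

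For the kernel specialization, take $T=T_{\sig}$ on $L^2(\nu)$ and $u=\one$, which is a unit vector because $\nu$ is a probability measure. By definition $T_{\sig}\one=d_{\sig}$. The duality \eqref{eq:duality} gives $\theta=\ip{\one}{\Phi_S^*\Phi_S\one}=\norm{\Phi_S\one}^2=\norm{\mu}^2$, and also $\theta=\int d_{\sig}\,d\nu$ directly. Then $r=d_{\sig}-\theta\one$ is the centered degree, so $\varepsilon$ is its $L^2(\nu)$ standard deviation, which equals $\theta\,\mathrm{CV}(d_{\sig})$ once $\theta>0$. To confirm that $\theta>0$ under the gap hypothesis, note that $T_{\sig}\ge0$ forces $\lambda_2\ge0$ whenever $v_1^\perp\neq\{0\}$. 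The dimension-at-least-two assumption guarantees this, so $\theta>\lambda_2\ge0$. The compactness and self-adjointness of $T_{\sig}$ follow from the bounded kernel $|k_{\sig}|\le1$, which makes the operator Hilbert--Schmidt.

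The PnL statement is identical once $T_{\pnl}=\Phi_\Pi^*\Phi_\Pi$ is available with a bounded kernel. Here $|\Pi_z|\le|a_t|$ and $\E a_t^2<\infty$, so the kernel is square-integrable and the same argument applies verbatim.
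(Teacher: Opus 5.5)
Your proof is correct and is essentially the paper's argument. The paper expands $u$ in an orthonormal eigenbasis, writes $\varepsilon^2=\sum_j\alpha_j^2(\lambda_j-\theta)^2$, drops the $j=1$ term, and uses $|\lambda_j-\theta|\ge\theta-\lambda_2$ for $j\ge2$; this is the coordinate form of your projection $Pr=\sin\phi\,(T-\theta)w$ followed by the lower bound on $\theta-T$ over $v_1^\perp$. Your coordinate-free version (numerical range plus Cauchy--Schwarz) does not need a complete eigenbasis, and your check that $T_{\sig},T_{\pnl}\ge0$ forces $\lambda_2\ge0$, hence $\theta>0$, supplies a step the paper only asserts.
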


\begin{proof}
Expand $u=\sum_j\alpha_jv_j$ in an orthonormal eigenbasis. The residual has squared norm $\sum_j\alpha_j^2(\lambda_j-\theta)^2$. For $j\ge2$, $|\lambda_j-\theta|\ge\theta-\lambda_2$, so $\sum_{j\ge2}\alpha_j^2\le\varepsilon^2/(\theta-\lambda_2)^2$. The left-hand side of \eqref{eq:approxalign} is $(\sum_{j\ge2}\alpha_j^2)^{1/2}$.
\end{proof}

Angles throughout are acute angles between nonzero one-dimensional spans, so eigenvector signs are immaterial. Perron--Frobenius positivity for positive operators, when its additional hypotheses hold \cite{schaefer1974}, can make the leading \emph{weights over signal designs} positive, but does not make them equal and does not imply a long-only portfolio across assets.

\begin{corollary}[Synthesis contracts the principal angle]\label{cor:synthesis-transport}
Let $\Phi$ be a compact synthesis operator and let $T=\Phi^*\Phi$ have $\lambda_1>\lambda_2\ge0$ and unit leading eigenvector $v_1$. For any unit $u$ with $\alpha_1=\ip{u}{v_1}\ne0$, set
\[
  e=\Phi u,\qquad h_1=\frac{\Phi v_1}{\sqrt{\lambda_1}}.
\]
Then $h_1$ is a unit leading eigenvector of $\Phi\Phi^*$, $e\ne0$, and
\begin{equation}\label{eq:transport-tangent}
  \tan\angle(e,h_1)
  \le \sqrt{\frac{\lambda_2}{\lambda_1}}\,
       \tan\angle(u,v_1).
\end{equation}
In particular, for $\Phi=\Phi_S$ and $u=\one$, the synthesized vector is the mean signal process $\mu$. The same result applies to $\Phi_\Pi$ and the centered equal-weight PnL process.
\end{corollary}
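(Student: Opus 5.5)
The plan is to decompose $u$ in an orthonormal eigenbasis of $T=\Phi^*\Phi$ and push the decomposition through $\Phi$. Write $u=\alpha_1v_1+w$ with $w\perp v_1$, so that $\norm{w}^2=1-\alpha_1^2$ and $\tan\angle(u,v_1)=\norm{w}/|\alpha_1|$.

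First, $h_1$ is a unit leading eigenvector of $\Phi\Phi^*$. We have $\norm{\Phi v_1}^2=\ip{v_1}{Tv_1}=\lambda_1$, so $h_1$ has unit norm. Moreover $\Phi\Phi^*\Phi v_1=\Phi Tv_1=\lambda_1\Phi v_1$. Since $\Phi\Phi^*$ and $\Phi^*\Phi$ share nonzero spectra with multiplicities, as in \eqref{eq:duality}, $\lambda_1$ is also the largest eigenvalue of $\Phi\Phi^*$.

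Next comes the orthogonality step, which is the key point. For any $x\perp v_1$,
\[
\ip{\Phi x}{\Phi v_1}=\ip{x}{Tv_1}=\lambda_1\ip{x}{v_1}=0.
\]
Hence $\Phi w\perp h_1$, and
\[
e=\Phi u=\alpha_1\sqrt{\lambda_1}\,h_1+\Phi w
\]
is an orthogonal decomposition of $e$ into its component along $h_1$ and a transverse part. Because $\alpha_1\ne0$ and $\lambda_1>0$, the $h_1$-component is nonzero, so $e\ne0$ and the angle $\angle(e,h_1)$ is acute and well defined. This gives
\[
\tan\angle(e,h_1)=\frac{\norm{\Phi w}}{|\alpha_1|\sqrt{\lambda_1}}.
\]

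Then bound the transverse part. We have $\norm{\Phi w}^2=\ip{w}{Tw}\le\lambda_2\norm{w}^2$, because $w$ lies in the invariant subspace $v_1^\perp$, on which $T$ has upper spectral edge $\lambda_2$. Substituting,
\[
\tan\angle(e,h_1)\le\sqrt{\lambda_2/\lambda_1}\;\frac{\norm{w}}{|\alpha_1|}=\sqrt{\lambda_2/\lambda_1}\,\tan\angle(u,v_1),
\]
which is \eqref{eq:transport-tangent}.

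The specializations are immediate. For $\Phi=\Phi_S$ and $u=\one$, which is a unit vector in $L^2(\nu)$ because $\nu$ is a probability measure, we get $\Phi_S\one=\int S_z\,d\nu=\mu$. The argument for $\Phi_\Pi$ is identical, with $\Phi_\Pi\one$ the centered equal-weight PnL process.

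The only delicate step is the orthogonality $\Phi w\perp\Phi v_1$. It relies on $v_1$ being an exact eigenvector of $T$, not merely a vector with large Rayleigh quotient. I would state it explicitly, because it is what turns the angle computation into an exact orthogonal decomposition.

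Sharpness, if it is to be addressed here, follows by choosing $u$ in $\spanop(v_1,v_2)$, where $v_2$ is a unit eigenvector with $Tv_2=\lambda_2v_2$. Then $\norm{\Phi w}^2=\lambda_2\norm{w}^2$ and the inequality holds with equality. When $\lambda_2$ is only a spectral edge and not an eigenvalue, one instead takes $u$ in the span of $v_1$ and approximate eigenvectors for $\lambda_2$, which gives ratios approaching the bound.
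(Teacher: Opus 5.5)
Your proof is correct and is essentially the paper's own argument. You use the decomposition $u=\alpha_1v_1+w$, the orthogonality $\ip{\Phi w}{\Phi v_1}=\ip{w}{Tv_1}=0$, the identities $\norm{\Phi v_1}^2=\lambda_1$ and $\norm{\Phi w}^2=\ip{w}{Tw}\le\lambda_2\norm{w}^2$, and the ratio of transverse to longitudinal lengths, with the leading-eigenvector claim coming from Gram/operator duality.

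One small simplification applies to your sharpness remark. Since $T$ is compact and positive, any $\lambda_2>0$ is an actual eigenvalue, and $\lambda_2=0$ forces $\Phi w=0$. So equality is attained exactly, and no approximate eigenvectors are needed.
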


\begin{proof}
Write $u=\alpha_1v_1+w$ with $w\perp v_1$. Then
\[
  \ip{\Phi w}{\Phi v_1}=\ip{w}{Tv_1}=0,\qquad
  \norm{\Phi v_1}^2=\lambda_1,\qquad
  \norm{\Phi w}^2=\ip{w}{Tw}\le\lambda_2\norm{w}^2.
\]
Taking the ratio of orthogonal to longitudinal lengths gives \eqref{eq:transport-tangent}; the leading-eigenvector statement follows from Gram/operator duality.
\end{proof}

The factor in \eqref{eq:transport-tangent} is sharp: if $v_2$ is a unit eigenvector with eigenvalue $\lambda_2$, equality holds for $u=\alpha v_1+\beta v_2$ with $\alpha\beta\ne0$ and $\alpha^2+\beta^2=1$, since the residual lies entirely in the $\lambda_2$-eigenspace. No smaller universal contraction factor is possible.

Combining \cref{lem:approxcentrality,cor:synthesis-transport}, if $b=\varepsilon/(\theta-\lambda_2)<1$, then
\begin{equation}\label{eq:transport-residual}
  \tan\angle(\Phi u,h_1)
  \le\sqrt{\frac{\lambda_2}{\lambda_1}}\,
       \frac{b}{\sqrt{1-b^2}}.
\end{equation}
A spectral gap helps in two ways: it controls alignment of the design weights, and the synthesis map gives less weight to the remaining eigendirections. If $\lambda_2=0$, every combined signal with a nonzero leading coefficient is exactly aligned. For $\Phi_S$, the angle compares \emph{whole signal histories}. To make a claim at one date, apply the corollary to $\Phi_t f=\int f(z)S_{z,t}\,d\nu(z)$ and its own Gram operator. Close alignment over a history need not hold at every date in that history.

\section{Large-library spectral convergence}

We next ask whether the sample operators estimate their population counterparts consistently. Because $\norm{S_z}_{\cH_S}=1$, the rank-one operators $S_z\otimes S_z$ are uniformly bounded in Hilbert--Schmidt norm.

\begin{theorem}[Empirical operator and PC convergence]\label{thm:opconv}
For iid signal designs, define
\[
  A_{\sig,K}:=\frac1K\sum_{i=1}^K S_{Z_i}\otimes S_{Z_i}.
\]
Then $A_{\sig,K}\to A_{\sig}$ almost surely in Hilbert--Schmidt norm. If the leading population eigenvalue is simple, with gap
\[
  \delta:=\lambda_1(A_{\sig})-\lambda_2(A_{\sig})>0,
\]
then the empirical leading eigendirection converges to the population leading eigendirection. Quantitatively, for a compatible choice of signs,
\[
  \sin\angle(\hat v_{1,K},v_1)
  \le \frac{2\norm{A_{\sig,K}-A_{\sig}}_{\mathrm{op}}}{\delta}.
\]
\end{theorem}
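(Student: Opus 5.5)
The plan has two parts: operator convergence via a Hilbert-space strong law, then eigenvector convergence via Davis--Kahan. Throughout, view the rank-one operators $X_i:=S_{Z_i}\otimes S_{Z_i}$ as random elements of the separable Hilbert space $\mathrm{HS}(\cH_S)$ of Hilbert--Schmidt operators on $\cH_S$.

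\textbf{Step 1: the operators are iid and integrable in $\mathrm{HS}(\cH_S)$.} For $f,g\in\cH_S$ we have $\norm{f\otimes g}_{\mathrm{HS}}=\norm{f}\norm{g}$. Hence $\norm{X_i}_{\mathrm{HS}}=\norm{S_{Z_i}}_{\cH_S}^2=1$. The map $s\mapsto s\otimes s$ is continuous from $\cH_S$ to $\mathrm{HS}(\cH_S)$, so measurability and the iid property carry over from the $S_{Z_i}$ to the $X_i$. Their Bochner mean is
\[
  \E X_1=\int S_z\otimes S_z\,d\nu(z)=A_{\sig}.
\]
To check that this Bochner integral agrees with the operator defining $A_{\sig}$, I would evaluate both against $\ip{h}{\cdot\,h'}$, which is a continuous linear functional on $\mathrm{HS}(\cH_S)$. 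With these facts, the same Hilbert-space strong law used in the proof of \cref{thm:lln}, now applied in $\mathrm{HS}(\cH_S)$, gives $A_{\sig,K}\to A_{\sig}$ almost surely in Hilbert--Schmidt norm.

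\textbf{Step 2: eigenvector perturbation.} Both $A_{\sig,K}$ and $A_{\sig}$ are positive, self-adjoint and compact. The first has finite rank; the second is trace class with trace $1$. Set $E=A_{\sig,K}-A_{\sig}$, and let $\hat v_{1,K}$ be a leading eigenvector of $A_{\sig,K}$ with eigenvalue $\hat\lambda_1$.

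\emph{Eigenvalue step.} By Weyl's inequality for compact self-adjoint operators, via the Courant--Fischer min-max characterization, $|\hat\lambda_j-\lambda_j|\le\norm{E}_{\mathrm{op}}$ for each $j$.

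\emph{Eigenvector step.} I would use a direct argument in the style of the proof of \cref{lem:approxcentrality}. Expand $\hat v=\hat v_{1,K}$ in an eigenbasis of $A_{\sig}$. From $A_{\sig}\hat v=\hat\lambda_1\hat v-E\hat v$, project onto $v_1^\perp$:
\[
  (\hat\lambda_1-A_{\sig})P_{v_1^\perp}\hat v=P_{v_1^\perp}E\hat v .
\]
On $v_1^\perp$ the spectrum of $A_{\sig}$ lies in $[0,\lambda_2]$. Therefore
\[
  \sin\angle(\hat v,v_1)=\norm{P_{v_1^\perp}\hat v}\le\frac{\norm{E}_{\mathrm{op}}}{\hat\lambda_1-\lambda_2}.
\]
By Weyl's inequality, $\hat\lambda_1-\lambda_2\ge\delta-\norm{E}_{\mathrm{op}}$.

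\emph{Two cases.*} If $\norm{E}_{\mathrm{op}}\le\delta/2$, the denominator is at least $\delta/2$, which gives the stated bound $2\norm{E}_{\mathrm{op}}/\delta$. If $\norm{E}_{\mathrm{op}}>\delta/2$, the right-hand side exceeds $1$ and the bound holds trivially. Because angles are between spans, sign choices are immaterial. Convergence of directions then follows from $\norm{E}_{\mathrm{op}}\le\norm{E}_{\mathrm{HS}}\to0$.

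\textbf{Main obstacle.} The only delicate point is Step 2 when $\hat\lambda_1$ is not yet simple: for small $K$, $\hat v_{1,K}$ may not be unique. The argument above holds for any unit vector in the leading eigenspace of $A_{\sig,K}$, so no uniqueness is needed. Once $\norm{E}_{\mathrm{op}}<\delta/2$, Weyl's inequality makes $\hat\lambda_1$ simple, and the direction is well defined almost surely for all large $K$. A secondary check is that Weyl and the projection argument remain valid in infinite dimensions. Compactness gives a discrete spectrum accumulating only at $0$, and the convention that $\lambda_2$ includes $0$ covers the case where $v_1^\perp$ contains part of the kernel.
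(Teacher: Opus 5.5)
Your proposal is correct and follows essentially the same route as the paper: the Hilbert-space strong law for the Hilbert--Schmidt-valued rank-one operators $S_Z\otimes S_Z$, then a case split at $\norm{E}_{\mathrm{op}}=\delta/2$, the variational (Weyl) lower bound $\hat\lambda_1\ge\lambda_1-\norm{E}_{\mathrm{op}}$, and inversion of $\hat\lambda_1-A_{\sig}$ on $v_1^\perp$ to obtain $\norm{E}_{\mathrm{op}}/(\delta-\norm{E}_{\mathrm{op}})\le 2\norm{E}_{\mathrm{op}}/\delta$. Your additional remarks on the measurability of $s\mapsto s\otimes s$ and on the eventual simplicity of $\hat\lambda_1$ are correct refinements that the paper leaves implicit.
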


\begin{proof}
Apply the Hilbert-space strong law to the Hilbert--Schmidt-valued variables $S_Z\otimes S_Z$. For completeness, the following argument gives the Davis--Kahan bound directly on the Hilbert space \cite{daviskahan1970}; \cite{yuwangsamworth2015} gives a related population-gap formulation for finite matrices. Put $E=A_{\sig,K}-A_{\sig}$ and $e=\norm{E}_{\mathrm{op}}$. If $e\ge\delta/2$, the bound is trivial. Otherwise the variational principle gives $\hat\lambda_1\ge\lambda_1-e$. Projecting the empirical eigenvector equation onto $v_1^\perp$ and inverting there yields
\[
 \sin\angle(\hat v_{1,K},v_1)
 \le \frac{e}{\delta-e}\le\frac{2e}{\delta}.
\]
The inverse exists because the transverse spectral edge is at most $\lambda_2$, leaving a gap of at least $\delta-e$.
\end{proof}

The eigengap identifies a particular leading direction. If $\lambda_1=\lambda_2$, individual sample PCs may rotate within a tied leading eigenspace even when that subspace is consistently estimated. An empirical claim that PC1 stabilizes should therefore report the gap and its sensitivity to resampling.

For kernel PCA, the same sampled designs produce $K^{-1}[k(Z_i,Z_j)]$. For the measurable, separable feature map used here, each ordered eigenvalue converges to its population counterpart (with zeros appended if needed), by \cref{thm:opconv} and Gram/operator duality. General integral-kernel spectral approximation is developed in \cite{koltchinskii2000,rosasco2010}. The Gram/operator duality in \eqref{eq:duality} permits computation on whichever side is smaller: the $K\times K$ library side or the fixed/functional feature side.

\section{The target-aligned moving frame}

Write realized demeaned returns as $\widetilde R_t=a_tQ_t$, where $a_t=\norm{\widetilde R_t}$ and $Q_t\in S(H)$. Fix a unit vector $e_1\in H$ to serve as north pole. It is a basis vector within the neutral space, not the standard basis vector of $\R^d$. Choose an orthogonal map $G_t:H\to H$ with $G_tQ_t=e_1$, so that the target points north at every date.

This leaves freedom to rotate or reflect the transverse coordinates while keeping the pole fixed. That freedom is the residual group $O(q-1)$ acting on $e_1^\perp$. We need only a measurable choice of $G_t$ across dates; a globally continuous choice over the sphere is neither required nor generally available.

Every signal becomes
\begin{equation}\label{eq:frame}
  P_{z,t}:=G_tS_{z,t}=p_{z,t}e_1+U_{z,t},
  \qquad U_{z,t}\perp e_1,
\end{equation}
with
\[
  p_{z,t}=\ip{S_{z,t}}{Q_t}=\mathrm{IC}_{z,t},
  \qquad \norm{U_{z,t}}^2=1-p_{z,t}^2.
\]
The gauge rotates the \emph{target} to the north pole at each date; it does not flip or rotate each signal independently. The same $G_t$ acts on every signal at that date.

The signal kernel decomposes exactly:
\begin{equation}\label{eq:kerneldecomp}
  k_{\sig}(z,z')
  =\int p_{z,t}p_{z',t}\,d\tau(t)
   +\int\ip{U_{z,t}}{U_{z',t}}\,d\tau(t)
  =:k_{\parallel}(z,z')+k_{\perpcomp}(z,z').
\end{equation}
Hence
\[
  T_{\sig}=T_{\parallel}+T_{\perpcomp}.
\]
The first term records shared alignment with returns. The second records similarity in the directions orthogonal to returns. Together they give the operator version of the finite-library Gram decomposition.

When realized $|p_{z,t}|$ is small, almost all of that date's signal energy lies in $U_{z,t}$. A small average IC through time does not imply this at each date: large positive and negative ICs can cancel in the average. This observation concerns the return pole $Q_t$, whereas \cref{cor:axial} concerns the mean signal direction.

At a fixed date, the equal-weight signal in the moving frame is
\[
  G_t\mu_K(t)=\bar p_K(t)e_1+\bar U_K(t),
\]
where $\bar p_K=K^{-1}\sum_i p_{Z_i,t}$ and $\bar U_K=K^{-1}\sum_iU_{Z_i,t}$. If the library distribution is invariant under the residual $O(q-1)$ action, then $\E_ZU_{Z,t}=0$ and the population mean lies in $\spanop(Q_t)$. This symmetry is an assumption about the signals. Changing coordinates does not create it.

\section{The PnL map and its spectrum}\label{sec:pnl-map}

The raw PnL process of design $z$ is
\[
  \Pi_z(t)=\ip{S_{z,t}}{\widetilde R_t}=a_tp_{z,t}.
\]
Let $P_0$ denote temporal centering, $P_0f=f-\int f\,d\tau$, and define the centered PnL feature map
\[
  \Psi_z:=P_0(ap_z)\in\cH_\Pi:=L_0^2(\cT,\tau).
\]
The return assumption $a\in L^2(\tau)$ in \cref{sec:return-pca} and the unit-signal bound $|p_{z,t}|\le1$ already give
\begin{equation}\label{eq:payoff-moment}
  \int_{\mathcal Z}\norm{ap_z}_{L^2(\tau)}^2\,d\nu(z)
  \le\norm{a}_{L^2(\tau)}^2<\infty.
\end{equation}
Thus integrability over both dates and designs is automatic in this setting; no additional payoff-moment assumption is needed. This is weaker than bounded return magnitude and suffices for the feature operators below.
The PnL covariance kernel and its integral operator are
\begin{align}
  k_{\pnl}(z,z')&:=\ip{\Psi_z}{\Psi_{z'}}_{\cH_\Pi}
  =\Cov_\tau(a_tp_{z,t},a_tp_{z',t}),\label{eq:pnlkernel}\\
  (T_{\pnl}f)(z)&:=\int k_{\pnl}(z,z')f(z')\,d\nu(z').
\end{align}
If each $\Psi_z$ with nonzero variance is divided by its time-series standard deviation, the same construction yields the PnL \emph{correlation} operator on those designs. Correlation PCA and covariance PCA need not agree.

Define the uncentered payoff map $L$ on the domain
\[
  \mathcal D(L):=\{f\in\cH_S:a_t\ip{f_t}{Q_t}\in L^2(\tau)\}
\]
by
\[
  (Lf)(t):=a_t\ip{f_t}{Q_t},
\]
and put $\mathcal L:=P_0L$ on $\mathcal D(L)$ for the centered payoff map. Then $\Psi_z=\mathcal LS_z$ for almost every design. Independently of whether these payoff maps extend to bounded operators on all of $\cH_S$, define
\[
  \Phi_\Pi f:=\int f(z)\Psi_z\,d\nu(z),
  \qquad T_{\pnl}=\Phi_\Pi^*\Phi_\Pi,
  \qquad A_{\pnl}:=\Phi_\Pi\Phi_\Pi^*.
\]
By \eqref{eq:payoff-moment}, $\Phi_\Pi$ is Hilbert--Schmidt and these two positive operators are trace class.

\paragraph{Bounded-operator factorization.}
For the following identities, assume additionally that $\mathcal L$ extends to a bounded map $\cH_S\to\cH_\Pi$. This holds, for example, if $a\in L^\infty(\tau)$, including any fixed finite history with finite dispersion. Boundedness permits passage through the Bochner integral and taking adjoints, giving
\begin{equation}\label{eq:pnl-factor}
  \Phi_\Pi=\mathcal L\Phi_S,
  \qquad
  T_{\pnl}=\Phi_S^*\mathcal L^*\mathcal L\Phi_S,
  \qquad
  A_{\pnl}=\mathcal L A_{\sig}\mathcal L^*.
\end{equation}

Under \eqref{eq:payoff-moment} alone, the feature-map definitions $T_{\pnl}=\Phi_\Pi^*\Phi_\Pi$ and $A_{\pnl}=\Phi_\Pi\Phi_\Pi^*$ remain valid. The adjoint factorizations above need the additional boundedness assumption. An unbounded payoff map would require a separate treatment of operator domains and closures, which we do not undertake here.

\begin{theorem}[Transverse annihilation]\label{thm:annihilation}
If $V\in\cH_S$ satisfies $\ip{V_t}{Q_t}=0$ almost everywhere, then $\mathcal LV=0$. Consequently, $T_{\pnl}$ depends on the library only through the dispersion-weighted IC processes $ap_z$ and is invariant to arbitrary changes in transverse components that preserve $p_z$.
\end{theorem}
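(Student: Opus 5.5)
The plan is to work directly from the definitions of $L$ and $\mathcal L=P_0L$ on $\mathcal D(L)$, so that no boundedness of $\mathcal L$ is needed.

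\textbf{Step 1 (the annihilation).} If $\ip{V_t}{Q_t}=0$ for $\tau$-almost every $t$, then $(LV)(t)=a_t\ip{V_t}{Q_t}=0$ almost everywhere. Hence $V\in\mathcal D(L)$ trivially, since the zero function lies in $L^2(\tau)$, and $LV=0$. Applying temporal centering gives $\mathcal LV=P_0 0=0$.

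\textbf{Step 2 (dependence only through $ap_z$).} By definition $\Psi_z=P_0(ap_z)$, so each PnL feature is a function of the dispersion-weighted IC process $ap_z$ alone. The kernel $k_{\pnl}(z,z')=\ip{\Psi_z}{\Psi_{z'}}_{\cH_\Pi}$ in \eqref{eq:pnlkernel} therefore depends on the pair $(z,z')$ only through $ap_z$ and $ap_{z'}$. Consequently the integral operator $T_{\pnl}$ on $L^2(\nu)$ is determined by the map $z\mapsto ap_z$ together with $\nu$. I will state this via $T_{\pnl}=\Phi_\Pi^*\Phi_\Pi$ with $\Phi_\Pi f=\int f(z)\Psi_z\,d\nu(z)$. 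This form uses only the feature-map definition, which is valid under \eqref{eq:payoff-moment}, and does not require the factorization \eqref{eq:pnl-factor}.

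\textbf{Step 3 (invariance under transverse changes).} Suppose the library is modified to $S'_z$, still unit-norm almost everywhere and jointly measurable, with $p'_{z,t}=p_{z,t}$ for $\nu\otimes\tau$-almost every $(z,t)$. In the moving frame \eqref{eq:frame}, only $U_{z,t}$ changes. Equivalently, $V_z:=S'_z-S_z$ satisfies $\ip{V_{z,t}}{Q_t}=0$ almost everywhere. Step 1 together with linearity of $L$ on the domain then gives $\Psi'_z=\mathcal LS'_z=\mathcal LS_z+\mathcal LV_z=\Psi_z$ for $\nu$-almost every $z$. It follows that $k'_{\pnl}=k_{\pnl}$ $\nu\otimes\nu$-almost everywhere, and hence $T'_{\pnl}=T_{\pnl}$. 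The same argument gives $A'_{\pnl}=A_{\pnl}$.

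\textbf{Main obstacle.} The only delicate point is measure-theoretic. I must pass from "almost every $(z,t)$" to "$\Psi'_z=\Psi_z$ in $\cH_\Pi$ for $\nu$-a.e.\ $z$." This follows from Fubini, using the joint measurability assumed in \cref{sec:growth-assumptions} and the integrability bound \eqref{eq:payoff-moment}. Two further checks are needed. First, $S_z\in\mathcal D(L)$, which holds because $|p_{z,t}|\le1$ and $a\in L^2(\tau)$. Second, the kernel equality must pass to equality of the operators, which follows because the kernels are in $L^2(\nu\otimes\nu)$, so the operators are Hilbert--Schmidt and determined by their kernels.
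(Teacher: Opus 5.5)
Your proposal is correct and matches the paper's proof: the pointwise identity $(LV)(t)=a_t\ip{V_t}{Q_t}=0$ gives $V\in\mathcal D(L)$ and $\mathcal LV=0$. The invariance then follows from $k_{\pnl}(z,z')=\ip{P_0(ap_z)}{P_0(ap_{z'})}$, with no boundedness of $\mathcal L$ needed. Your Fubini bookkeeping and the route through linearity of $L$ on $\mathcal D(L)$ only spell out what the paper leaves implicit; Step 3 also follows directly from $\Psi'_z=P_0(ap'_z)=P_0(ap_z)$.
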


\begin{proof}
Pointwise, $(LV)(t)=a_t\ip{V_t}{Q_t}=0$, so $V\in\mathcal D(L)$ and temporal centering preserves zero. The kernel statement follows directly from $k_{\pnl}(z,z')=\ip{P_0(ap_z)}{P_0(ap_{z'})}$ and does not need boundedness on all of $\cH_S$.
\end{proof}

Signal and PnL spectra can therefore disagree for two reasons:
\begin{enumerate}
  \item \emph{transverse annihilation}: signal PCA sees $T_\parallel+T_\perpcomp$, while the payoff map kills $T_\perpcomp$;
  \item \emph{longitudinal transformation}: even if $U_z\equiv0$, PnL covariance temporally centers $ap_z$ and quadratically reweights dates through $a_t^2$.
\end{enumerate}
Either operation can change the ordering of principal components even when all population moments are known exactly.

\begin{proposition}[Equal-weight PnL limit]\label{prop:ewpnl}
For the uncentered equal-weight PnL,
\[
  \bar\Pi_K(t):=\frac1K\sum_{i=1}^K\Pi_{Z_i}(t)=a_t\bar p_K(t).
\]
Under iid design sampling and \eqref{eq:payoff-moment}, almost surely in $L^2(\tau)$,
\[
  \bar\Pi_K\longrightarrow L\mu
  =a_t\E_Z[p_{Z,t}].
\]
After temporal centering, $K^{-1}\sum_i\Psi_{Z_i}\to\mathcal L\mu$.
\end{proposition}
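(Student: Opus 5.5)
The identity $\bar\Pi_K(t)=a_t\bar p_K(t)$ is immediate: $\Pi_z(t)=a_tp_{z,t}$, so it is linearity of the finite average at each date. The convergence will not go through the (possibly unbounded) payoff map $L$ acting on $\mu_K\to\mu$ from \cref{thm:lln}. That route needs $L$ bounded on $\cH_S$, which \eqref{eq:payoff-moment} does not provide. Instead, I will apply a Hilbert-space strong law directly in $L^2(\tau)$ to the payoff features.

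\textbf{Step 1 (random elements in $L^2(\tau)$).} Treat $\Pi_{Z_i}=ap_{Z_i}$ as iid $L^2(\tau)$-valued random elements. Measurability comes from joint measurability of $(z,t)\mapsto S_{z,t}$, the fixed return process, and the separability assumptions already made. Integrability comes from $|p_{z,t}|\le1$, which gives $\norm{ap_z}_{L^2(\tau)}\le\norm{a}_{L^2(\tau)}$ uniformly in $z$; this is the bound behind \eqref{eq:payoff-moment}. So the first moment is finite, and the Hilbert-space strong law gives
\[
  \bar\Pi_K\longrightarrow\E_Z[ap_Z]=\int_{\mathcal Z}ap_z\,d\nu(z)
\]
almost surely in $L^2(\tau)$, the limit being a Bochner integral.

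\textbf{Step 2 (identify the limit as $L\mu$).} I expect this to be the only delicate point, because it is exactly where unboundedness of $L$ could bite. I would evaluate the Bochner integral pointwise in $t$, by Fubini, using joint measurability. The function $(z,t)\mapsto a_tp_{z,t}$ is jointly integrable, since it is square integrable on the probability space $\nu\otimes\tau$. Hence for $\tau$-a.e.\ $t$,
\[
  \Bigl(\int ap_z\,d\nu\Bigr)(t)=a_t\int\ip{S_{z,t}}{Q_t}\,d\nu(z)=a_t\ip{\mu(t)}{Q_t}.
\]
The second equality uses that $\mu(t)=\int S_{z,t}\,d\nu(z)$ a.e., which is the pointwise version of the $\cH_S$-Bochner integral, again by Fubini. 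This shows $a\ip{\mu}{Q}\in L^2(\tau)$, so $\mu\in\mathcal D(L)$, and $L\mu=a_t\E_Z[p_{Z,t}]$. No boundedness of $L$ on all of $\cH_S$ is used.

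\textbf{Step 3 (centered version).} $P_0$ is a bounded linear operator on $L^2(\tau)$: it is the orthogonal projection onto mean-zero functions, since $\tau$ is a probability measure. It also commutes with finite averages. Therefore
\[
  K^{-1}\sum_i\Psi_{Z_i}=P_0\bar\Pi_K\to P_0L\mu=\mathcal L\mu
\]
almost surely in $L^2(\tau)$, by continuity of $P_0$.
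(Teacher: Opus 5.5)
Your proposal is correct and follows the paper's proof. Like the paper, you apply the Hilbert-space strong law directly to the payoff processes $ap_{Z_i}$ in $L^2(\tau)$, use Fubini to identify the mean as $a_t\E_Z[p_{Z,t}]$, and use boundedness of $P_0$ for the centered statement, never relying on continuity of $L$ on all of $\cH_S$. The only cosmetic difference is that you get $\mu\in\mathcal D(L)$ from the limit being an $L^2(\tau)$-valued Bochner integral, whereas the paper invokes Jensen's inequality together with \eqref{eq:payoff-moment}.
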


\begin{proof}
Apply the Hilbert-space strong law directly to the payoff processes $ap_{Z_i}$. Fubini's theorem identifies their mean with $a_t\E_Zp_{Z,t}$. Moreover, Jensen's inequality and \eqref{eq:payoff-moment} show that this function is in $L^2(\tau)$, so $\mu\in\mathcal D(L)$ and the mean is $L\mu$. Finally $P_0$ is bounded. No continuity assumption on $L$ over all of $\cH_S$ is used.
\end{proof}

Equal-weight PnL is therefore controlled by the cross-library mean IC process: $(L\mu)(t)=a_t\ip{\mu_t}{Q_t}$. It measures the mean signal's overlap with the realized return direction, rather than with the leading eigenvector of $C_{\mathrm{ret}}$ or $M_{\mathrm{ret}}$. It is not, in general, PnL PC1 either.

\begin{proposition}[When equal weights are a PnL PC]
The constant function $\one$ is an eigenfunction of $T_{\pnl}$ if and only if
\[
  d_{\pnl}(z):=\int_{\mathcal Z}\Cov_\tau(a p_z,a p_{z'})\,d\nu(z')
\]
is constant almost everywhere. It is PC1 if that eigenvalue dominates $T_{\pnl}$ on $\one^\perp$.
\end{proposition}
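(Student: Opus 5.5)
The argument follows \cref{prop:rowsum} step for step, with $k_{\pnl}$ in place of $k_{\sig}$.

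\emph{Eigenfunction claim.} First we record that $k_{\pnl}(z,z')=\ip{\Psi_z}{\Psi_{z'}}_{\cH_\Pi}$ is jointly measurable. By Cauchy--Schwarz, $|k_{\pnl}(z,z')|\le\norm{\Psi_z}\,\norm{\Psi_{z'}}$. Combined with \eqref{eq:payoff-moment} and the contraction $\norm{P_0}\le1$, this makes $k_{\pnl}$ square integrable with respect to $\nu\otimes\nu$. Hence $T_{\pnl}$ is a well-defined Hilbert--Schmidt operator on $L^2(\nu)$, and its kernel form agrees with $\Phi_\Pi^*\Phi_\Pi$. In particular $\one\in L^2(\nu)$, because $\nu$ is a probability measure, and
\[
(T_{\pnl}\one)(z)=\int_{\mathcal Z}k_{\pnl}(z,z')\,d\nu(z')=d_{\pnl}(z),
\]
using \eqref{eq:pnlkernel} to identify the kernel with $\Cov_\tau(ap_z,ap_{z'})$. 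So $\one$ is an eigenfunction, $T_{\pnl}\one=\lambda\one$ in $L^2(\nu)$, exactly when $d_{\pnl}=\lambda$ holds $\nu$-almost everywhere. This proves the first claim. Note that $\lambda=\int d_{\pnl}\,d\nu=\norm{\Phi_\Pi\one}^2\ge0$.

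\emph{Leading-PC claim.} $T_{\pnl}$ is self-adjoint, so $\one^\perp$ is invariant whenever $\one$ is an eigenvector, and $L^2(\nu)=\spanop(\one)\oplus\one^\perp$ reduces $T_{\pnl}$. If $\lambda$ strictly exceeds $\sup\sigma(T_{\pnl}|_{\one^\perp})$ (taken as zero when $\one^\perp$ is trivial), then the variational principle gives $\lambda=\sup_{\norm f=1}\ip{f}{T_{\pnl}f}$. Any maximizer must have no component in $\one^\perp$, so $\one$ is the unique PC1 up to scaling. We would also note, as in \cref{prop:rowsum}, that $A_{\pnl}(\Phi_\Pi\one)=\Phi_\Pi T_{\pnl}\one=\lambda\,\Phi_\Pi\one$. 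Thus the centered equal-weight PnL $\Phi_\Pi\one=\mathcal L\mu$ (\cref{prop:ewpnl}) is then a leading eigenvector of $A_{\pnl}$ whenever it is nonzero.

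\emph{Main obstacle.} The only delicate step is confirming that no boundedness of $\mathcal L$ is needed. This is why we define everything through $\Psi_z$ and $\Phi_\Pi$ rather than the factorization \eqref{eq:pnl-factor}. The estimate \eqref{eq:payoff-moment} supplies all the integrability required for the Fubini exchange in computing $T_{\pnl}\one$.
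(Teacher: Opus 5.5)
Your proposal is correct and is essentially the paper's argument. The paper gives no separate proof here and treats the result as the PnL analogue of \cref{prop:rowsum}, whose proof is exactly $T\one=d$ followed by strict dominance on the invariant complement $\one^\perp$; your extra checks (that \eqref{eq:payoff-moment} makes $k_{\pnl}$ a Hilbert--Schmidt kernel so no boundedness of $\mathcal L$ is needed, and that $\Phi_\Pi\one=\mathcal L\mu$ via \cref{prop:ewpnl}) are consistent with the paper's standing assumptions and make that implicit argument explicit.
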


This is a PnL equal-centrality condition, distinct from its signal-space counterpart. One can hold while the other fails.

\section{A commutative picture---and where it breaks}

Throughout this section assume that the centered payoff map $\mathcal L:\cH_S\to\cH_\Pi$ is bounded, as in the bounded-factorization paragraph of the preceding section. The main objects then fit into the following diagram:
\[
\begin{array}{ccccc}
 L^2(\nu) & \xrightarrow{\ \Phi_S\ } & \cH_S & \xrightarrow{\ \mathcal L\ } & \cH_\Pi\\
 \one & \longmapsto & \mu & \longmapsto & \mathcal L\mu\\[2mm]
 T_{\sig}=\Phi_S^*\Phi_S && A_{\sig}=\Phi_S\Phi_S^* &&
 A_{\pnl}=\mathcal L A_{\sig}\mathcal L^*.
\end{array}
\]
The first row maps design weights to signals and then to PnL. The second follows equal weights along the same route; the third gives the corresponding second-moment operators. These identities do not guarantee that a leading signal eigenvector becomes a leading PnL eigenvector. The payoff map removes some directions and changes the relative size of others, so it can change their spectral ordering.

\begin{proposition}[A sufficient condition for PC transport]\label{prop:payoff-transport}
Under the standing boundedness assumption, suppose $A_{\sig}$ and $D=\mathcal L^*\mathcal L$ have a common orthonormal eigenbasis $(v_j)$ on the closed span of the positive-eigenvalue eigenspaces of $A_{\sig}$, with
\[
  A_{\sig}v_j=\lambda_jv_j,\qquad Dv_j=c_jv_j,
  \qquad \lambda_j>0,\quad c_j\ge0.
\]
Let $v_1$ be a leading signal PC and assume $c_1>0$ and $\lambda_1c_1>\sup_{j\ge2}\lambda_jc_j$ (with empty supremum zero). Then $\mathcal Lv_1$ is a leading eigenvector of $A_{\pnl}$.
\end{proposition}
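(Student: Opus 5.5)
The plan is to transport the eigenstructure through the Gram/operator duality, this time for the factorization $A_{\pnl}=\mathcal L A_{\sig}\mathcal L^*$ from \eqref{eq:pnl-factor}. The key observation is that $A_{\pnl}=BB^*$ with $B:=\mathcal L A_{\sig}^{1/2}$. Its companion operator is
\[
B^*B=A_{\sig}^{1/2}DA_{\sig}^{1/2}.
\]
These two operators have the same nonzero spectrum, and eigenvectors pass between them via $B$. Equivalently, one may work directly with $\Phi_\Pi=\mathcal L\Phi_S$ and $T_{\pnl}$. I prefer the $A_{\sig}^{1/2}$ route because the hypothesis is stated in the eigenbasis of $A_{\sig}$.

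First I diagonalize $B^*B$. Let $N:=\ker A_{\sig}$, so that $\cH_S=\overline{\spanop}(v_j)\oplus N$. Since $A_{\sig}^{1/2}$ vanishes on $N$ and $A_{\sig}^{1/2}v_j=\sqrt{\lambda_j}v_j$, the common eigenbasis gives
\[
B^*Bv_j=\sqrt{\lambda_j}A_{\sig}^{1/2}Dv_j=\lambda_jc_jv_j,\qquad B^*B|_N=0.
\]
So $B^*B$ is diagonal in the orthonormal basis $(v_j)\cup(\text{basis of }N)$, with eigenvalues $\lambda_jc_j$ and $0$. By the hypothesis $\lambda_1c_1>\sup_{j\ge2}\lambda_jc_j\ge0$, the vector $v_1$ is the unique leading eigenvector, with eigenvalue $\lambda_1c_1>0$. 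The spectrum is compared by value, and all eigenvalues are nonnegative.

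Second, I transport this to $A_{\pnl}$. Put $h:=Bv_1=\sqrt{\lambda_1}\,\mathcal Lv_1$. Then
\[
\norm{h}^2=\ip{v_1}{B^*Bv_1}=\lambda_1c_1>0,
\]
and $A_{\pnl}h=BB^*Bv_1=\lambda_1c_1h$. Moreover $\|A_{\pnl}\|=\|B\|^2=\|B^*B\|=\lambda_1c_1$, so no eigenvalue of $A_{\pnl}$ exceeds $\lambda_1c_1$. Hence $\mathcal Lv_1$, a nonzero multiple of $h$, is a leading eigenvector of $A_{\pnl}$. The simplicity of the eigenvalue also transfers: $B$ is injective from the $\lambda_1c_1$-eigenspace of $B^*B$ onto the $\lambda_1c_1$-eigenspace of $BB^*$, and $B^*$ is injective in the reverse direction. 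Both eigenspaces are therefore one-dimensional. This last point is extra, since the statement only asks for a leading eigenvector.

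The main obstacle is technical bookkeeping rather than the idea. I need the basis hypothesis, stated only on the closed span of the positive eigenspaces, to combine with $\ker A_{\sig}$ into a full orthonormal basis of $\cH_S$. This holds because $A_{\sig}$ is compact and self-adjoint. I also need $A_{\sig}^{1/2}$ to be well defined and to satisfy $A_{\sig}=A_{\sig}^{1/2}A_{\sig}^{1/2}$, so that $BB^*=\mathcal LA_{\sig}\mathcal L^*$. Both facts follow from the spectral theorem for the positive trace-class operator $A_{\sig}$, together with the standing boundedness of $\mathcal L$. If one prefers to avoid square roots, the fallback is to compute directly: $A_{\pnl}\mathcal Lv_1=\mathcal LA_{\sig}Dv_1$ is not immediate, because $D$ sits on the wrong side. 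That is exactly why the $B^*B$ route is the right choice.
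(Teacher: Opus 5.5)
Your proof is correct, and it takes a slightly different route from the paper.

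\textbf{The paper's route.} The paper never uses a square root. It expands $A_{\sig}=\sum_j\lambda_j v_j\otimes v_j$ and pushes this through the bounded map $\mathcal L$ to get $A_{\pnl}=\sum_j\lambda_j(\mathcal Lv_j)\otimes(\mathcal Lv_j)$. It then notes that $\ip{\mathcal Lv_i}{\mathcal Lv_j}=\ip{v_i}{Dv_j}=c_j\delta_{ij}$. So the nonzero vectors $\mathcal Lv_j/\sqrt{c_j}$ are orthonormal, and the expansion is already a spectral decomposition of $A_{\pnl}$ with nonzero eigenvalues $\lambda_jc_j$.

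\textbf{Your route.} Your operator $B=\mathcal LA_{\sig}^{1/2}$ satisfies $Bv_j=\sqrt{\lambda_j}\,\mathcal Lv_j$. The diagonality of $B^*B$ is therefore the same orthogonality fact in different form. You then transport the eigenvector by Gram duality and bound the top of the spectrum with $\norm{BB^*}=\norm{B^*B}$.

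\textbf{What each buys.} The paper's version is shorter and gives the whole nonzero spectrum and eigenbasis of $A_{\pnl}$ at once. Yours isolates the leading eigenvalue through the C*-identity and makes the transfer of simplicity explicit. The cost is invoking $A_{\sig}^{1/2}$ and completing $(v_j)$ by $\ker A_{\sig}$.

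\textbf{One mistaken remark.} Your last paragraph says the direct computation is not immediate, but it is. We have $A_{\pnl}\mathcal Lv_1=\mathcal LA_{\sig}\mathcal L^*\mathcal Lv_1=\mathcal LA_{\sig}Dv_1=\lambda_1c_1\,\mathcal Lv_1$, since $D$ acts on $v_1$ first and sits on exactly the right side. What the eigen-equation alone does not give is maximality, namely that no eigenvalue of $A_{\pnl}$ exceeds $\lambda_1c_1$. That step is where either your norm argument or the paper's orthogonal expansion is needed. So your stated reason for preferring the $B^*B$ route is mistaken, although the route itself is sound.
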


\begin{proof}
The nonzero images $\mathcal Lv_j/\sqrt{c_j}$ are orthonormal, since $\ip{\mathcal Lv_i}{\mathcal Lv_j}=\ip{v_i}{Dv_j}$. Boundedness and the trace-class expansion of $A_{\sig}$ give
\[
  A_{\pnl}=\sum_j\lambda_j(\mathcal Lv_j)\otimes(\mathcal Lv_j).
\]
Its nonzero eigenvalues are consequently $\lambda_jc_j$. The strict dominance condition proves the claim.
\end{proof}

This is one sufficient set of conditions for signal PC1 to remain leading after the PnL map. Both the common eigenbasis and the ordering of the products $\lambda_jc_j$ need justification in an application.

\section{Counterexamples and structural models}

The additive models in this section allow signals with varying lengths. This makes the distinction between mean and covariance easy to see. Normalizing a nonzero vector as $S_z=X_z/\norm{X_z}$ places it on the unit sphere, but can change its moments. Where we use a normalized version, we state the symmetry needed for the conclusion to survive.

\subsection{The first moment can be orthogonal to PC1}

In a two-dimensional subspace, let $X=(c,Z)$ with $c>0$, $\E Z=0$, and $\E Z^2\gg c^2$. The mean points horizontally, while the leading uncentered second-moment direction is vertical. For a unit-circle version, take $S=X/\norm{X}$ and assume the law of $Z$ is symmetric about zero. The normalized mean remains horizontal, while sufficiently large symmetric two-point values of $Z$ make PC1 vertical. Thus even a population confined to one hemisphere can have its mean perpendicular to PC1.

\subsection{The same PnL PCA, different signal PCA}

In the moving frame let two libraries have identical IC processes $p_z$ but different transverse processes $U_z$ and $\widetilde U_z$. Their PnL kernels are identical by \cref{thm:annihilation}, while their signal kernels differ by
\[
  \int\!\left(\ip{U_{z,t}}{U_{z',t}}-
  \ip{\widetilde U_{z,t}}{\widetilde U_{z',t}}\right)d\tau(t).
\]
Their signal PCs can be arbitrarily different subject to the spherical norm constraints.

\subsection{A fixed mean does not fix the spectrum}

Two libraries can share a mean while their residual covariances, and hence their PCs, differ. For example,
\[
  S_z=m+BF_z+\varepsilon_z,
  \qquad \E_ZF_Z=0,\quad\E_Z\varepsilon_Z=0,
  \quad \E_Z[F_Z\otimes\varepsilon_Z]=0.
\]
Then EWS converges to $m$, whereas PCA orders directions according to
\[
  A_{\sig}=m\otimes m+B\Cov(F)B^*+\E[\varepsilon\otimes\varepsilon].
\]
For $m$ to be PC1, its direction must lead in the combined second moment. If the factor-plus-residual covariance annihilates $m$, this requires $\norm{m}^2$ to exceed its largest transverse eigenvalue. Otherwise, covariance along $m$ also contributes. Under the library law of large numbers, residuals cancel in the average while their covariance remains visible to PCA.

\subsection{A canonical factor can emerge---but from the measure}

Let $r_t$ be a unit direction favored by the research measure---for example, the return-derived direction identified as reversal in the motivating observation---and suppose
\[
  S_{z,t}=\beta_zr_t+\eta_{z,t},
  \qquad \E_Z\eta_{Z,t}=0.
\]
Then $\mu_t=(\E\beta_Z)r_t$. A nonzero mean loading and cancellation of mean residuals make the EWS align with $r_t$. This direction is signal PC1 only under an additional second-moment dominance condition, and it is return PC1 only if it matches the leading eigendirection of the chosen return operator. Its image under the centered payoff map is PnL PC1 only under further payoff-covariance conditions. Neither dimension nor library size supplies these relations.

For a unit-sphere version, set $S_{z,t}=(\beta_zr_t+\eta_{z,t})/\norm{\beta_zr_t+\eta_{z,t}}$ when the denominator is nonzero. If, conditional on $\beta_z$, the residual law is invariant under all orthogonal transformations fixing $r_t$, the normalized population mean lies in $\spanop(r_t)$. Its scalar coefficient need not equal $\E\beta_Z$ and must remain nonzero for a limiting direction to exist.

\section{What low dimension does and does not imply}

At one date the $K\times K$ Gram matrix and the $q\times q$ signal moment both have rank at most $\min(K,q)$. Once $K>q$, the signals must be linearly dependent, although $q$ independent signals already span the available asset space. None of these statements implies clustering, a nonzero mean, or alignment between EWS and a leading signal or return component. Iid uniform points on $S^{q-1}$ have population mean zero and isotropic second moment $I_H/q$. As $K$ grows, their EWS tends to zero and no unique population PC exists. The ratio $K/q$ measures algebraic crowding, not the shape or mean of the sampling measure.

The time dimension also matters. Although each date has rank at most $q$, whole signal processes live in $L^2(\tau;H)$, whose dimension is $qT$ for $T$ dates of positive mass and can be infinite for a population date space. A library can therefore be contemporaneously low-rank yet possess rich temporal structure.

A stronger diversity requirement bounds every pair's time-averaged signal correlation by $\cos\theta$, for a fixed angle $\theta>0$. With equally weighted dates, the normalized history vectors $T^{-1/2}(S_{i,1},\ldots,S_{i,T})$ then form a spherical code with minimum angle $\theta$. At fixed $T$, compactness gives a finite maximum library size, although growing $T$ can permit enormously many separated histories. This excludes exact duplicates, but still does not force a nonzero EWS or alignment with PC1: common components can cancel in the mean while surviving in the second moment. Limits involving library size, history length, and a shrinking separation angle therefore require a separate analysis of both the geometry and the selection rule. These questions are treated in a companion working paper \cite{nunes2026separated}.

\section{Two asymptotic axes: library size and history length}

The population theory above treats temporal inner products as known. In data, they must be estimated from a history of length $T$. Library size $K$ and history length $T$ therefore create separate sources of uncertainty, especially when the library is large relative to the available history.

Let $Y$ be the $T\times K$ matrix of temporally centered PnLs. The empirical PnL covariance is $\widehat C_{\pnl}=T^{-1}Y^\top Y$, so
\[
  \operatorname{rank}(\widehat C_{\pnl})\le \min(K,T-1).
\]
When $K/T\to\gamma\in(0,\infty)$, sample-covariance spectra are not operator-norm consistent under the classical isotropic benchmark. For iid entries with variance $\sigma^2$, the noise spectrum has Marchenko--Pastur upper edge
\begin{equation}\label{eq:mpedge}
  \lambda_+=\sigma^2(1+\sqrt\gamma)^2.
\end{equation}
Large population eigenvalues, often called spikes, can have systematically distorted sample estimates; weak spikes may be indistinguishable from the noise bulk \cite{marchenkopastur1967,johnstone2001,paul2007}. Serial dependence, overlapping forward horizons, changing volatility, and cross-sectional constraints can invalidate the literal iid formula. A useful benchmark must account for those features of the data. Random-matrix filtering has a long history in financial correlation estimation \cite{laloux1999}; shrinkage provides a distinct way to regularize noisy covariance estimates \cite{ledoitwolf2004}. Neither supplies evidence for a population factor without an appropriate sampling comparison.

Time-averaged signal Gram matrices also have temporal estimation error, but not an identical iid aspect ratio. Stacking the $q$ neutral coordinates at each of $T$ dates gives a $qT\times K$ matrix $X$ with $G_{\sig}=T^{-1}X^\top X$ and rank at most $\min(K,qT)$. An idealized independent-coordinate model would involve $K/(qT)$ with appropriate variance scaling; actual coordinates are constrained and grouped by date. Hence both signal and PnL spectra need calibration, with operator-specific rank, dependence, and normalization. Dividing a matrix by $K$, as in the population-operator convention, also divides its noise edge by $K$; all comparisons must use the same scaling.

There are therefore three distinct regimes:
\begin{enumerate}
  \item $K\to\infty$ with population temporal inner products known: the operator limit studied above;
  \item $T\to\infty$ with fixed $K$: classical covariance consistency;
  \item $K,T\to\infty$ jointly with $K/T$ non-negligible: a high-dimensional spectral problem requiring regularization and noise calibration.
\end{enumerate}

\Cref{thm:opconv} controls sampling over designs when each $S_z$ is observed as a population element. It does not by itself control replacement of $\tau$-inner products by a finite time average. In the joint regime, the relevant error contains both a library-sampling term and a temporal-estimation term; the latter need not vanish in operator norm.

Use the iid edge \eqref{eq:mpedge} as a reference only when its assumptions and scaling fit the proposed null. Resampling for uncertainty and constructing a null serve different purposes. A joint block bootstrap, including the stationary bootstrap of \cite{politisromano1994}, retains existing dependence between signals, so it does not represent a world without a common factor.

To test for such a factor, construct surrogate data that remove the dependence being tested. Independent circular shifts or phase randomization across designs are possible choices under suitable stationarity assumptions. For signals, preserve each date's neutral-vector structure. For PnLs, specify which volatility patterns the null retains. Compare the observed eigenvalues and gaps with this benchmark, then assess stability on held-out dates. The resampling scheme must be chosen for the particular hypothesis and data structure.

\section{A numerical illustration and an empirical program}\label{sec:empirical}

The theory suggests diagnostics that distinguish genuine limits from artifacts of multiplicity. In the running 20-asset, 3,000-signal case, $q=19$, the sphere is $S^{18}$, and $K/q\approx158$. The contemporaneous Gram matrix has at most 19 nonzero eigenvalues. Under \cref{cor:axial}, longitudinal variance plus mean pairwise signal correlation must exceed $1/19\approx5.26\%$. That is a substantive hurdle in a 20-asset cross-section: the threshold can bind. With $d=500$ and $q=499$, it falls to $1/499\approx0.2004\%$, a threshold that any cloud retaining even modest longitudinal concentration clears easily. This is why the 20-asset case is informative. The contrast presumes that concentration does not fall with dimension at the same rate as $1/q$; axial symmetry remains a separate hypothesis.

\subsection{Define the sampling unit}

Record how each signal was produced: its generator, feature family, transformation, lookback, training seed, and submission account. Show convergence plots both by raw signal count and with balanced representation across families or clusters. A plot that grows by adding near-duplicates measures replication; it does not establish that new independent variation is being averaged away.

\subsection{Trace four paths as the library grows}

\begin{samepage}
Choose nested library sizes spanning the available range---for the running example, $K=25,50,100,200,500,1000,2000,3000$---and repeatedly randomize the ordering. Estimate:
\begin{enumerate}
  \item $\mu_K(t)$ and its alignment with signal-cloud PC1 and the chosen return-derived principal direction, reporting the observed factor identification separately;
  \item signal Gram eigenvalues/eigenvectors from time-averaged cross-sectional inner products;
  \item $\bar\Pi_K(t)=a_t\bar p_K(t)$ and its Sharpe or other out-of-sample statistic;
  \item PnL covariance and correlation spectra across signals.
\end{enumerate}
\end{samepage}
Use several orderings: uniform over signals, balanced over families, and chronological discovery order. Stability under only one ordering shows that the results depend on how designs are sampled. For each estimated spectrum, report $K$, $T$, $K/T$, matrix rank and scaling, the shrinkage method, and the leading-eigenvalue and eigengap distributions under a specified null. Include the appropriate Marchenko--Pastur reference edge when its assumptions apply. For the stacked signal Gram, also report $q$ and $K/(qT)$; these dimensions help interpret the spectrum but do not account for dependence on their own.

\subsection{Test mean--PC alignment directly}\label{sec:date-alignment}

At a fixed date $t$ with $\mu_K(t)\ne0$, set $m_K=\mu_K(t)/\norm{\mu_K(t)}$, write $M_K=M_{K,t}$, and let $\hat v_{1,K}$ be its unit leading eigenvector. Compute
\begin{align*}
  c_K&=\left|\ip{m_K}{\hat v_{1,K}}\right|,\\
  r_K&=\norm{(I-m_Km_K^\top)M_{K}m_K},\\
  g_K&=m_K^\top M_Km_K-
  \lambda_{\max}((I-m_Km_K^\top)M_K(I-m_Km_K^\top)).
\end{align*}
Here $c_K$ measures alignment, $r_K$ measures how far the mean direction is from being an eigenvector, and $g_K$ compares its energy with the largest transverse energy. Together the residual and gap give a bound for the observed matrix:
\[
  \sin\angle(m_K,\hat v_{1,K})\le \frac{r_K}{g_K}
  \qquad(g_K>0).
\]
To see this, put $P_K=I-m_Km_K^\top$, $B_K=(P_KM_KP_K)|_{m_K^\perp}$, $h_K=P_KM_Km_K$, and $\theta_K=m_K^\top M_Km_K$. Write $\hat v_{1,K}=\alpha m_K+w$ with $w\perp m_K$. The leading eigenvalue $\hat\lambda_1\ge\theta_K$ satisfies
\[
  (\hat\lambda_1 I-B_K)w=\alpha h_K.
\]
If $g_K>0$, then $\alpha\ne0$: otherwise $w=\hat v_{1,K}$ would give $\hat\lambda_1\le\lambda_{\max}(B_K)<\theta_K\le\hat\lambda_1$, a contradiction. The inverse on the transverse block has norm at most $1/g_K$, so $\norm{w}\le|\alpha|r_K/g_K$. Dividing by $|\alpha|$ proves the stronger tangent bound $\tan\angle(m_K,\hat v_{1,K})\le r_K/g_K$. Thus the diagnostics quantitatively bound $c_K$. For asymptotic alignment a sufficient condition is $r_K/g_K\to0$ with $g_K>0$; a small residual alone is insufficient near a degeneracy. Alignment alone does not establish axial symmetry, which additionally predicts a flat transverse spectrum.

To evaluate the empirical counterpart of \eqref{eq:axial-variance}, compute $a_{i,K}=m_K^\top S_{Z_i,t}$ and use $K^{-1}\sum_i(a_{i,K}-\bar a_K)^2+\norm{\mu_K(t)}^2=K^{-1}\sum_i a_{i,K}^2$, with $\bar a_K=\norm{\mu_K(t)}$.\footnote{For $K>1$ unit signals, the mean off-diagonal correlation is $\widehat c_{\rm off}=(K\norm{\mu_K(t)}^2-1)/(K-1)$. Recover the squared empirical centroid as $\norm{\mu_K(t)}^2=[1+(K-1)\widehat c_{\rm off}]/K$ when using this exact empirical identity. Under iid sampling, $\widehat c_{\rm off}$ is an unbiased estimator of the population independent-pair correlation $\norm{\mu_t}^2$; the squared empirical centroid instead has expectation $\norm{\mu_t}^2+(1-\norm{\mu_t}^2)/K$. The difference is diagonal inclusion, not a defect in the off-diagonal population estimator.} Without axial symmetry, exceeding $1/q$ alone does not certify PC1 alignment; the residual and transverse gap above remain the direct tests.

This certificate concerns the signal cloud's $M_K$. Test the motivating return-PC alignment separately by comparing $m_K$ with an estimated $v^{\mathrm{ret}}_1$ from the chosen return operator and window, as defined in \cref{sec:return-pca}. Report its centering convention, $q/T$ aspect ratio, and estimation uncertainty, and use held-out dates for predictive claims. The signal-cloud residual and gap do not certify this return-PC comparison.

\paragraph{A worked geometric example.}
To cross the threshold in \cref{cor:axial} in a controlled way, identify $H$ with $\R^q$, fix a unit pole $m\in H$, and sample
\[
 S=a m+\sqrt{1-a^2}\,U,\qquad 0<a<1,
 \quad U\sim\operatorname{Unif}(S^{q-2}\subset m^\perp).
\]
Here latitude is relative to the mean-signal pole $m$, not to realized returns. Every signal has unit norm and positive latitude. The population mean is $am$, the longitudinal energy is $a^2$, and each of the $q-1$ transverse eigenvalues is $(1-a^2)/(q-1)$. Thus the mean is PC1 exactly when $qa^2>1$. Below the threshold it is orthogonal to the entire leading eigenspace; at equality all $q$ second-moment eigenvalues tie. In the running example, $q=19$: $U$ is uniform on $S^{17}$, each of the 18 transverse eigenvalues is $(1-a^2)/18$, and the threshold is $19a^2>1$. Centered covariance, by contrast, has zero energy along $m$ for every $a$: fixing the latitude removes longitudinal dispersion.

\Cref{fig:threshold} uses $K=3000$ iid transverse directions and sweeps $19a^2$ while retaining the same random directions. This is one reproducible sample path, not a confidence band or a return-data experiment. The transition in population geometry is sharp, while the finite sample's alignment changes over a neighborhood of the threshold. A positive sample eigengap below the threshold does not imply a unique population PC1. Above it, the residual-to-gap ratio bounds the observed angle once $g_K>0$; where $g_K\le0$, the certificate is silent. The supplementary material described below contains the code and numerical diagnostics.

\begin{figure}[!htbp]
\centering
\includegraphics[width=\textwidth]{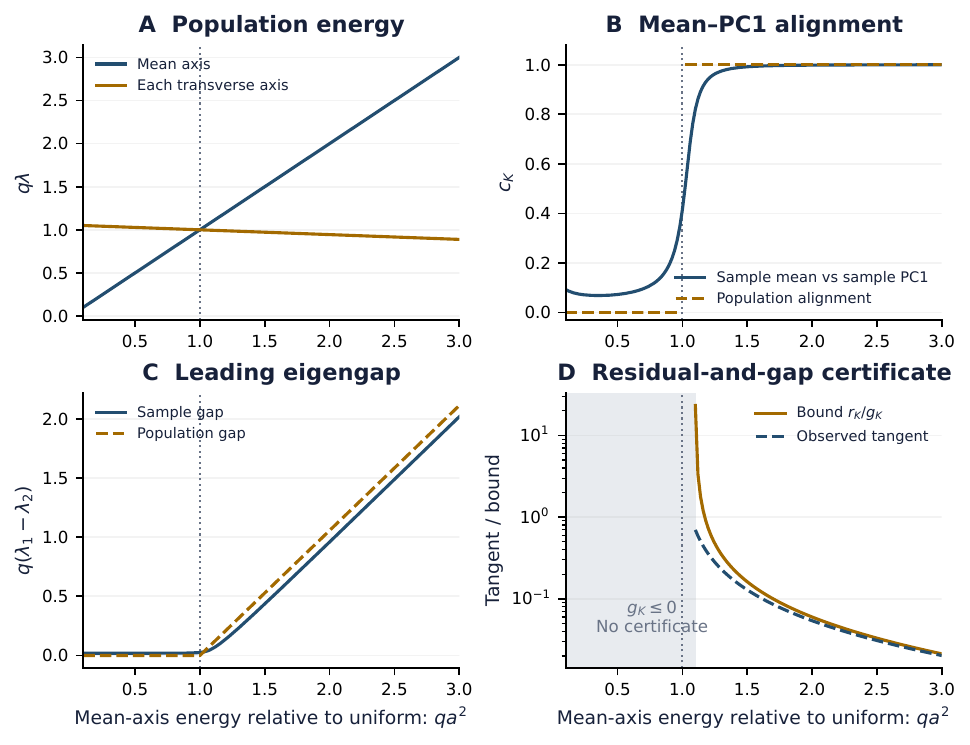}
\caption{Synthetic axial clouds on $S^{18}$, $K=3000$, random seed 20260911. The horizontal variable is $19a^2$, and the dotted line marks the population threshold. (A) Population energies. (B) Sample mean--PC1 cosine and its population counterpart; the latter is undefined at the tie. (C) Sample and population leading eigengaps; the population gap is zero below the threshold because 18 transverse directions tie. (D) The tangent certificate, plotted only where $g_K>0$. All 146 clouds use the same sampled transverse directions; their diagnostics satisfy the bound pointwise.}
\label{fig:threshold}
\end{figure}

\paragraph{Supplementary code.}
\ifarxiv
The simulation script, its numerical outputs (\nolinkurl{threshold.csv}, \nolinkurl{validation.json}), and build instructions accompany this paper as arXiv ancillary files, listed on the abstract page.
\else
The archive \nolinkurl{Large_Signal_Libraries_source.zip} is embedded in this PDF and supplied separately. It contains the manuscript, simulation, figure, diagnostics, and build instructions. Open the paperclip beside this paragraph or use a PDF reader's attachment panel; readers without attachment support can use the separate ZIP.
\fi

\subsection{Test equal centrality}\label{sec:history-alignment}

For the finite signal Gram matrix $G_K$ and PnL covariance matrix $C_K$, inspect the dispersion of row sums, using standard deviations with divisor $K$ and nonzero mean degrees:
\[
  \mathrm{CV}_{\sig}=\frac{\operatorname{sd}(G_K\one)}{|\operatorname{mean}(G_K\one)|},
  \qquad
  \mathrm{CV}_{\pnl}=\frac{\operatorname{sd}(C_K\one)}{|\operatorname{mean}(C_K\one)|}.
\]
\Cref{lem:approxcentrality} turns row-sum variation into an alignment bound once the gap is included. Set $\widetilde G_K=G_K/K$, $u_K=\one/\sqrt K$, and $\theta_K=u_K^\top\widetilde G_Ku_K=\one^\top G_K\one/K^2=\norm{\mu_K}_{\cH_S}^2\ge0$. Then
\[
  \sin\angle(u_K,\hat v_{1,K})
  \le
  \frac{\theta_K\,\mathrm{CV}_{\sig}}
       {\theta_K-\lambda_2(\widetilde G_K)},
\]
provided $\theta_K>\lambda_2(\widetilde G_K)$. Here $\mathrm{CV}_{\sig}$ is computed from the row degrees $G_K\one/K$ (CV is scale invariant), and $\hat v_{1,K}$ now lives in \emph{design-weight space}. If the displayed upper bound is $b_K<1$, \cref{cor:synthesis-transport} gives, for the empirical synthesis map of \cref{app:finite},
\[
  \tan\angle\!\left(\mu_K,
        \frac{\Phi_{S,K}\hat v_{1,K}}{\sqrt{\lambda_1(\widetilde G_K)}}\right)
  \le \sqrt{\frac{\lambda_2(\widetilde G_K)}{\lambda_1(\widetilde G_K)}}
       \frac{b_K}{\sqrt{1-b_K^2}}.
\]
Since $\Phi_{S,K}u_K=\mu_K$, the bound applies to the combined signal history itself, using the same inner product that defines $G_K$. The construction for $C_K$ gives the corresponding bound for centered equal-weight PnL. Its population mean row degree has the direct interpretation
\[
  \theta_{\pnl}=\norm{\Phi_\Pi\one}_{\cH_\Pi}^2
  =\Var_\tau\!\left(\int_{\mathcal Z}\Pi_z\,d\nu(z)\right),
\]
the variance of the equal-weight PnL process. The residual measures dispersion of PnL covariance row degrees, and the gap compares this ensemble variance with the remaining spectrum. If the mean row sum is near zero, use the absolute residual because CV becomes unstable. These bounds describe the observed matrix; conclusions about the population still require temporal calibration. A small CV provides little assurance when the gap is also small.

The diagnostics answer different questions. \Cref{sec:date-alignment} compares the EWS with the signal cloud's leading asset direction at one date. \Cref{sec:history-alignment} compares fixed design weights and the histories they produce. Close alignment over a whole history can conceal poor alignment on particular dates. Conversely, daily alignment need not imply one leading weight pattern for the entire history. If the diagnostics disagree, examine the angles and gaps by regime and each regime's contribution to the history norm.

\subsection{Exploit the moving-frame decomposition}

Estimate the signal kernel as $K_\parallel+K_\perp$ using realized IC products and transverse inner products. Compare its spectrum with that of $K_\parallel$ to see what is lost when the transverse components are removed. Next compare $K_\parallel$ with the uncentered, dispersion-weighted IC Gram matrix, and then with centered PnL covariance. These steps separate the effects of projection, dispersion weighting, and temporal centering.

\subsection{Respect time dependence and selection}

Use rolling or blocked out-of-sample windows, block or stationary bootstrap resampling, and family-level resampling. Report eigengaps with uncertainty. Factor alignments, PCs, and Sharpe ratios chosen after inspecting thousands of signals are selection-affected statistics; the asymptotic theory does not remove data-snooping bias or replace multiple-testing corrections \cite{white2000,harvey2016}.

\section{Interpretive consequences}

\paragraph{Library growth alone does not imply EWS--PC1 alignment.}
Without assumptions on how designs are added, the library mean need not converge at all. When it does, alignment with signal-cloud PC1 is governed by \cref{thm:meanpc,cor:axial}. Alignment with a return component needs a further relation between the research distribution and returns. Reversal identifies the dominant direction in the motivating observation; it is the relation between the research distribution and returns, rather than the signal count, that needs explaining.

\paragraph{EWS and PC1 answer different questions.}
EWS asks which signed direction survives averaging. PC1 asks which direction carries the most squared projection. A population can have zero mean and a strong principal axis, or a nonzero mean perpendicular to that axis, as the examples above show.

\paragraph{Signal PCA and PnL PCA are not competing estimators.}
Signal PCA measures similarity between signal histories, including components orthogonal to returns. PnL PCA measures covariance between payoff histories. Shared transverse structure may dominate the former and disappear from the latter. Changes in return magnitude and temporal centering can also reshape the PnL spectrum.

\paragraph{Low-reversal sub-libraries need not be higher PCs.}
Selecting signals with low reversal correlation changes the design distribution being averaged. The subset may combine several PCs, and both its mean and its row-sum pattern can differ from those of the full library. A correlation constraint does not identify an eigenvector.

\paragraph{Normalization changes the reported ensemble.}
The raw EWS converges linearly to $\mu$. A date-by-date renormalized EWS converges to $\mu_t/\norm{\mu_t}$ only where $\mu_t\ne0$. Near zeros, small estimation errors can cause large directional changes. Unit-gross, volatility, and turnover normalizations introduce further nonlinearities not covered by the simplest mean theorem.

\section{Conclusion}

Under the relevant assumptions, equal weights estimate a mean and signal PCA estimates a second moment or covariance. Equal-weight PnL follows the mean IC process scaled by return magnitude. PnL PCA describes covariance after projection onto returns, dispersion weighting, and temporal centering. Agreement between these objects requires the structural conditions developed above, such as mean--eigenvector compatibility, a positive eigengap, equal kernel row sums, or the stated symmetry and factor assumptions.

On $S^{q-1}$, axial symmetry makes the geometric dividing line explicit: the nonzero mean direction is signal-cloud PC1 exactly when its longitudinal energy exceeds the isotropic share $1/q$. Large $K/q$ by itself supplies no such concentration. Explaining the reported alignment with a leading return component further requires a relation between the signal-design measure and the return operator. Date-by-date signal geometry, whole-history ensemble alignment, and return-factor alignment are distinct empirical questions; finite-history calibration is required for population inference.

\begin{samepage}
\phantomsection\addcontentsline{toc}{section}{Acknowledgments and AI-assistance disclosure}
\section*{Acknowledgments and AI-assistance disclosure}

The author developed and revised this paper through iterative dialogue with ChatGPT, an AI system developed by OpenAI. The system assisted with mathematical organization, counterexamples, and manuscript production. Claude, an AI system developed by Anthropic, was used for editorial polishing, bibliography verification, and proofreading of the final drafts. Neither system is an author or independent reviewer. Marc Nunes directed the inquiry, selected the claims, and accepts full responsibility for the paper's content, accuracy, and conclusions.
\end{samepage}

\appendix
\section{Finite-library matrix identities}\label{app:finite}

For observed designs $Z_1,\ldots,Z_K$, define the synthesis map
\[
  \Phi_{S,K}v:=\frac1{\sqrt K}\sum_{i=1}^K v_iS_{Z_i}.
\]
Then
\[
  \Phi_{S,K}^*\Phi_{S,K}=\frac1K G_{\sig,K},
  \qquad
  \Phi_{S,K}\Phi_{S,K}^*=A_{\sig,K}.
\]
Thus the scaled Gram matrix and feature-space empirical operator share nonzero eigenvalues. The equal-weight signal is
\[
  \mu_K=\frac1K\sum_iS_{Z_i}=\frac1{\sqrt K}\Phi_{S,K}\one.
\]
If $G_{\sig,K}\one=c_K\one$, then
\[
  A_{\sig,K}\mu_K=(c_K/K)\mu_K.
\]
The finite PnL identities are identical with $S_{Z_i}$ replaced by $\Psi_{Z_i}$.

\section{A normalization warning}

Let $D$ be the diagonal matrix of PnL standard deviations. PnL correlation is $R=D^{-1}CD^{-1}$. Even if $C\one=c\one$, it need not follow that $R\one$ is constant unless the marginal volatilities are compatible with the covariance structure. PCA of covariance, PCA of correlation, and risk-normalized combination should therefore be reported separately.

\phantomsection


\addcontentsline{toc}{section}{References}
\begin{thebibliography}{99}
\small\raggedright
\setlength{\itemsep}{2pt}
\setlength{\parsep}{0pt}
\bibitem{batesgranger1969}
Bates, J. M. and Granger, C. W. J. (1969). The combination of forecasts. \emph{Operational Research Quarterly} 20(4), 451--468. \href{https://doi.org/10.1057/jors.1969.103}{\nolinkurl{doi:10.1057/jors.1969.103}}.

\bibitem{bosq2000}
Bosq, D. (2000). \emph{Linear Processes in Function Spaces: Theory and Applications}. Springer. \href{https://doi.org/10.1007/978-1-4612-1154-9}{\nolinkurl{doi:10.1007/978-1-4612-1154-9}}.

\bibitem{dauxois1982}
Dauxois, J., Pousse, A., and Romain, Y. (1982). Asymptotic theory for the principal component analysis of a vector random function: some applications to statistical inference. \emph{Journal of Multivariate Analysis} 12(1), 136--154. \href{https://doi.org/10.1016/0047-259X(82)90088-4}{\nolinkurl{doi:10.1016/0047-259X(82)90088-4}}.

\bibitem{daviskahan1970}
Davis, C. and Kahan, W. M. (1970). The rotation of eigenvectors by a perturbation. III. \emph{SIAM Journal on Numerical Analysis} 7(1), 1--46. \href{https://doi.org/10.1137/0707001}{\nolinkurl{doi:10.1137/0707001}}.

\bibitem{demiguel2009}
DeMiguel, V., Garlappi, L., and Uppal, R. (2009). Optimal versus naive diversification: How inefficient is the $1/N$ portfolio strategy? \emph{Review of Financial Studies} 22(5), 1915--1953. \href{https://doi.org/10.1093/rfs/hhm075}{\nolinkurl{doi:10.1093/rfs/hhm075}}.

\bibitem{dingmartin2017}
Ding, Z. and Martin, R. D. (2017). The Fundamental Law of Active Management: Redux. \emph{Journal of Empirical Finance} 43, 91--114. \href{https://doi.org/10.1016/j.jempfin.2017.05.005}{\nolinkurl{doi:10.1016/j.jempfin.2017.05.005}}.

\bibitem{grinold1994}
Grinold, R. C. (1994). Alpha is volatility times IC times score. \emph{Journal of Portfolio Management} 20(4), 9--16. \href{https://doi.org/10.3905/jpm.1994.409482}{\nolinkurl{doi:10.3905/jpm.1994.409482}}.

\bibitem{harvey2016}
Harvey, C. R., Liu, Y., and Zhu, H. (2016). \ldots\ and the cross-section of expected returns. \emph{Review of Financial Studies} 29(1), 5--68. \href{https://doi.org/10.1093/rfs/hhv059}{\nolinkurl{doi:10.1093/rfs/hhv059}}.

\bibitem{jegadeesh1990}
Jegadeesh, N. (1990). Evidence of predictable behavior of security returns. \emph{Journal of Finance} 45(3), 881--898. \href{https://doi.org/10.1111/j.1540-6261.1990.tb05110.x}{\nolinkurl{doi:10.1111/j.1540-6261.1990.tb05110.x}}.

\bibitem{johnstone2001}
Johnstone, I. M. (2001). On the distribution of the largest eigenvalue in principal components analysis. \emph{Annals of Statistics} 29(2), 295--327. \href{https://doi.org/10.1214/aos/1009210544}{\nolinkurl{doi:10.1214/aos/1009210544}}.

\bibitem{kallenberg2005}
Kallenberg, O. (2005). \emph{Probabilistic Symmetries and Invariance Principles}. Springer. \href{https://doi.org/10.1007/0-387-28861-9}{\nolinkurl{doi:10.1007/0-387-28861-9}}.

\bibitem{koltchinskii2000}
Koltchinskii, V. and Gin\'e, E. (2000). Random matrix approximation of spectra of integral operators. \emph{Bernoulli} 6(1), 113--167. \href{https://doi.org/10.2307/3318636}{\nolinkurl{doi:10.2307/3318636}}.

\bibitem{laloux1999}
Laloux, L., Cizeau, P., Bouchaud, J.-P., and Potters, M. (1999). Noise dressing of financial correlation matrices. \emph{Physical Review Letters} 83(7), 1467--1470. \href{https://doi.org/10.1103/PhysRevLett.83.1467}{\nolinkurl{doi:10.1103/PhysRevLett.83.1467}}.

\bibitem{ledoitwolf2004}
Ledoit, O. and Wolf, M. (2004). A well-conditioned estimator for large-dimensional covariance matrices. \emph{Journal of Multivariate Analysis} 88(2), 365--411. \href{https://doi.org/10.1016/S0047-259X(03)00096-4}{\nolinkurl{doi:10.1016/S0047-259X(03)00096-4}}.

\bibitem{lehmann1990}
Lehmann, B. N. (1990). Fads, martingales, and market efficiency. \emph{Quarterly Journal of Economics} 105(1), 1--28. \href{https://doi.org/10.2307/2937816}{\nolinkurl{doi:10.2307/2937816}}.

\bibitem{marchenkopastur1967}
Marchenko, V. A. and Pastur, L. A. (1967). Distribution of eigenvalues for some sets of random matrices. \emph{Mathematics of the USSR-Sbornik} 1(4), 457--483. \href{https://doi.org/10.1070/SM1967v001n04ABEH001994}{\nolinkurl{doi:10.1070/SM1967v001n04ABEH001994}}.

\bibitem{mardia2000}
Mardia, K. V. and Jupp, P. E. (2000). \emph{Directional Statistics}. Wiley. \href{https://doi.org/10.1002/9780470316979}{\nolinkurl{doi:10.1002/9780470316979}}.

\bibitem{nunes2026}
Nunes, M. (2026). \emph{Signal Correlation, IC, and PnL Dependence}. \href{https://arxiv.org/abs/2609.09588}{arXiv:2609.09588}.

\bibitem{nunes2026separated}
Nunes, M. (2026). \emph{Separated Signal Libraries: Packing, Saturation, and Joint Spectral Limits}. Working paper, in preparation.

\bibitem{paul2007}
Paul, D. (2007). \href{https://www3.stat.sinica.edu.tw/statistica/J17N4/J17N418/J17N418.html}{Asymptotics of sample eigenstructure for a large dimensional spiked covariance model}. \emph{Statistica Sinica} 17(4), 1617--1642.

\bibitem{politisromano1994}
Politis, D. N. and Romano, J. P. (1994). The stationary bootstrap. \emph{Journal of the American Statistical Association} 89(428), 1303--1313. \href{https://doi.org/10.1080/01621459.1994.10476870}{\nolinkurl{doi:10.1080/01621459.1994.10476870}}.

\bibitem{qianhua2004}
Qian, E. and Hua, R. (2004). \href{https://www.panagora.com/assets/JOIM-Active-Risk-and-Information-Ratio.pdf}{Active risk and information ratio}. \emph{Journal of Investment Management} 2(3), 20--34.

\bibitem{rosasco2010}
Rosasco, L., Belkin, M., and De Vito, E. (2010). \href{https://www.jmlr.org/papers/v11/rosasco10a.html}{On learning with integral operators}. \emph{Journal of Machine Learning Research} 11, 905--934.

\bibitem{saad2011}
Saad, Y. (2011). \emph{Numerical Methods for Large Eigenvalue Problems}, 2nd ed. SIAM. \href{https://doi.org/10.1137/1.9781611970739}{\nolinkurl{doi:10.1137/1.9781611970739}}.

\bibitem{schaefer1974}
Schaefer, H. H. (1974). \emph{Banach Lattices and Positive Operators}. Springer. \href{https://doi.org/10.1007/978-3-642-65970-6}{\nolinkurl{doi:10.1007/978-3-642-65970-6}}.

\bibitem{scholkopf1998}
Sch\"olkopf, B., Smola, A., and M\"uller, K.-R. (1998). Nonlinear component analysis as a kernel eigenvalue problem. \emph{Neural Computation} 10(5), 1299--1319. \href{https://doi.org/10.1162/089976698300017467}{\nolinkurl{doi:10.1162/089976698300017467}}.

\bibitem{timmermann2006}
Timmermann, A. (2006). Forecast combinations. In G. Elliott, C. W. J. Granger, and A. Timmermann (eds.), \emph{Handbook of Economic Forecasting}, Vol. 1, 135--196. Elsevier. \href{https://doi.org/10.1016/S1574-0706(05)01004-9}{\nolinkurl{doi:10.1016/S1574-0706(05)01004-9}}.

\bibitem{white2000}
White, H. (2000). A reality check for data snooping. \emph{Econometrica} 68(5), 1097--1126. \href{https://doi.org/10.1111/1468-0262.00152}{\nolinkurl{doi:10.1111/1468-0262.00152}}.

\bibitem{yuwangsamworth2015}
Yu, Y., Wang, T., and Samworth, R. J. (2015). A useful variant of the Davis--Kahan theorem for statisticians. \emph{Biometrika} 102(2), 315--323. \href{https://doi.org/10.1093/biomet/asv008}{\nolinkurl{doi:10.1093/biomet/asv008}}.

\end{thebibliography}
\end{document}